\tikzstyle{tensor}=[rectangle,draw=blue!50,fill=blue!20,thick]
\tikzstyle{input}=[circle,draw=blue!50,fill=blue!20,thick]
\newtheorem{thm}{Theorem}
\newtheorem{lem}[thm]{Lemma}
\newtheorem{definition}{Definition}
\newtheorem{prop}[thm]{Proposition}
\newtheorem{coro}[thm]{Corollary}
\newtheorem{clam}[thm]{Claim}
\font\myfont=cmr12 at 20pt
\DeclareMathOperator{\Tr}{Tr}
\begin{document}
	
	\title{\myfont Efficient Online Quantum Generative Adversarial Learning Algorithms with Applications}

	
\author{
	\IEEEauthorblockN{Yuxuan Du\IEEEauthorrefmark{1}, Min-Hsiu Hsieh\IEEEauthorrefmark{2}, Dacheng Tao\IEEEauthorrefmark{1}}
	
	\IEEEauthorblockA{\IEEEauthorrefmark{1}UBTECH Sydney Artificial Intelligence Centre and the School of Information Technologies, Faculty of Engineering and Information Technologies, The University of Sydney, Australia}
	
	\IEEEauthorblockA{\IEEEauthorrefmark{2}Centre for Quantum Software and Information, Faculty of Engineering and Information Technology, University of Technology Sydney, Australia
		}
}
	
	
	
	\date{\today}

	
	\maketitle

	\begin{abstract}
The exploration of quantum algorithms that possess quantum advantages is a central topic in quantum computation and quantum information processing. One potential candidate in this area   is quantum generative adversarial learning (QuGAL), which conceptually has  exponential advantages over classical adversarial networks. However, the corresponding learning algorithm remains obscured. In this paper, we propose the first quantum generative adversarial learning algorithm\textemdash the quantum multiplicative matrix weight algorithm (QMMW)\textemdash which enables the efficient processing of fundamental tasks. The computational complexity of QMMW is polynomially proportional to the number of training rounds and  logarithmically proportional to the input size. The core concept of the proposed algorithm combines QuGAL with online learning. We exploit  the  implementation of QuGAL with parameterized quantum circuits, and numerical experiments for the task of entanglement test for pure state are provided to support our claims.
\end{abstract}

\section{Introduction}
The principal interest in quantum computation is the exploration of potential applications that outperform their classical counterparts. The rapid development of  quantum hardware divides this interest into short-term and long-term goals. The short-term goal is to devise quantum algorithms that not only possess quantum advantages but can also be implemented on near-term devices \cite{preskill2018quantum}. The long-term goal is to employ fault-tolerant quantum computers that are capable of providing remarkable quantum speedups over classical methods \cite{shor1999polynomial} to tackle practical real-world problems.    
 
Quantum machine learning is one of the most promising candidates for achieving both  short-term and long-term goals   \cite{biamonte2017quantum}, and the proposed quantum  generative adversarial learning (QuGAL)  strengthens this belief \cite{lloyd2018quantum}. The main theoretical conclusion of QuGAL is that  exponential quantum advantages may exist under the assumption that the target data distribution can be efficiently encoded into a density matrix \cite{lloyd2018quantum}. Conceptually,  QuGAL involves two players, a generator and a discriminator, which play a zero-sum game. At each training round, the generator tries to approximate the target data to fool the discriminator, while the discriminator tries to distinguish the fake data from the real data. When the generator and discriminator are both constructed by  quantum operations, the adversarial quantum learning game has the potential to converge to Nash equilibrium with an exponential speedup.
  
Despite promising theoretical results, two issues related to QuGAL have not been explored. First, it is unclear what kinds of learning tasks can be accomplished by QuGAL to potential advantages. Second, an explicit learning algorithm of QuGAL that can fast converge to the equilibrium remains unexplored.  Previous studies mainly focus on the implementation of QuGAL under near-term quantum devices as so-called quantum generative adversarial networks (QuGANs)  \cite{dallaire2018quantum,situ2018adversarial,zeng2018learning,romero2019variational,zoufal2019quantum}. In particular, the generator and discriminator of QuGANs are constructed by employing  parameterized quantum circuits (PQCs) that  are composed of a set of trainable parameterized single qubit gates and two-qubit CNOT gates   \cite{du2018expressive}.  However, the intrinsic optimization mechanism of PQCs that iteratively updates each gate  destroys the required convex-concave property in QuGAL, which implies that the obtained result may not converge to Nash equilibrium and may induce additional training difficulties, e.g.,  mode collapse and vanishing gradients \cite{arjovsky2017wasserstein}. Two key issues therefore exist for QuGANs, i.e.,  how to improve stability and convergence in training QuGANs, and whether QuGANs deliver potential quantum advantage.
 
 To tackle the aforementioned issues, we revisit the theory of QuGAL in this paper from the perspective of  online learning \cite{hazan2016introduction}. The integration of online learning with QuGAL is motivated by the fact that online learning algorithms can efficiently approximate the optimal result for the zero-sum game associated with the convex-concave property, and the training of QuGAL satisfies this condition. This  observation enables us to devise a quantum adversarial learning algorithm with online learning features and to theoretically analyze its potential quantum advantages.  Additionally, online learning has been employed as a powerful tool for  relieving training difficulties in classical generative adversarial networks  (GANs) \cite{grnarova2018an}, which motivates us to introduce such a method in optimizing QuGANs. Lastly, we  investigate how to use QuGAL to accomplish  learning quantum information processing tasks, such as the quantum entanglement test for pure state and quantum state discrimination \cite{horodecki2009quantum,chitambar2014asymptotic,chitambar2014local,chitambar2013revisiting,chitambar2017round}. Our study opens avenues for exploring  quantum information processing tasks using  quantum generative  adversarial learning models.

We summarize the  main results of this work as follows.
 \begin{itemize}
 \item We propose a quantum generative adversarial learning algorithm, the quantum multiplicative matrix weight (QMMW) algorithm, which rapidly  converges to Nash equilibrium as  expected from QuGAL. QMMW is inspired by the multiplicative matrix weight algorithm, which is a popular online learning algorithm that efficiently finds optimal solutions to the zero-sum game \cite{kale2007efficient}. We prove that  the convergence  rate of QMMW is $\mathcal{O}(\sqrt{N/T})$, where $N$ is the number of qubits corresponding to the target density matrix and $T$ is the number of training rounds. An attractive feature of QMMW is that the output states of both the generator and discriminator  can be viewed as Gibbs states. By exploiting the efficient Gibbs sampling method proposed by \cite{van2018improvements}, we prove   that the computational complexity of QMMW   is  $\mathcal{O}(N^3T^4)$. 

	\item We introduce a multiplicative weight training method to overcome the training difficulty encountered in QuGANs. The core ingredient of this  method is to seek the most possible optimized direction for achieving global equilibrium through the inherent mechanism of online learning. In the training process, a multiplicative weight training method  puts more weight to the gradient that is more probable to fool the discriminator.  Since the multiplicative weight training method only focuses on re-weighting the gradient, it can be seamlessly embedded into other optimization methods used in QuGANs.
	\item We investigate the potential quantum advantages by applying  QMMW and QuGANs to solve quantum information tasks,  i.e., the pure state entanglement test and the quantum state discrimination. In particular, we numerically validate that QuGANs are  capable of accomplishing the pure state entanglement test with modest quantum resources, which sheds light on using QuGANs to handle other quantum information learning tasks. All numerical simulations demonstrated in this paper are implemented in Python, leveraging the pyQuil and QuTiP libraries to access the numerical simulators \cite{smith2016practical,johansson2012qutip}.
 \end{itemize}

\subsection{Related Works}
Online convex optimization  has been broadly applied to the study of linear programming, semidefinite programming, and zero-sum game \cite{kale2007efficient,hazan2016introduction}. Recently, it has been   employed to study shadow quantum tomography \cite{aaronson2018online}. An advanced meta-algorithm in online convex learning, the so-called multiplicative weight, has been introduced to study the quantum zero-sum game algorithm \cite{van2019quantum},  non-interactive zero-sum quantum games \cite{jain2009parallel},  the parallel approximation of semidefinite programs and minmax problems \cite{gutoski2013parallel},  and quantum semi-definite programming \cite{2017arXiv171002581B}. Despite their similarities, the various studies, including this work, have adopted  different update rules and focused on different tasks, leading to distinct  theoretical results on, for example,  convergence rate. 

In the rest of this paper,  we first introduce QuGAL and discuss its applications on quantum information processing in Section \ref{Sec:prob_app}. In  Section \ref{Sec:QMMW}, we describe QMMW and theoretically analyze its computational  cost. In  Section \ref{Sec:QuGAN}, we give the multiplicative weight training method for QuGANs. In Section \ref{Sec:QMM_Qu_app}, we explain how to employ QMMW and QuGANs to tackle quantum information processing tasks.  In Section \ref{Sec:Simu}, we numerically validate the effectiveness of applying QuGANs to accomplish quantum information processing  tasks. Section \ref{Sec:concl} concludes the paper.

\section{Quantum generative adversarial learning and its applications}\label{Sec:prob_app}
We formally define the quantum generative adversarial learning (QuGAL) problem  and devise a general  framework for using  QuGAL to accomplish quantum information processing tasks.  Suppose that a given mixed state $\rho$ is represented by $N$ qubits,  the goal of QuGAL is  to reproduce $\rho$. QuGAL employs two players to set up a zero-sum game  \cite{osborne1994course}: The first player refers to a generator, which generates a mixed state $\sigma$ to approximate $\rho$; the second player refers to the discriminator $\mathcal{D}$, which aims to  maximally distinguish $\rho$ from $\sigma$. Such a zero-sum game is evaluated by a loss function $\mathcal{L}(\cdot,\cdot)$, where its physical meaning is the classification error. In the training process, the generator tries to minimize the loss function while the discriminator tries to maximize it. By labeling the state $\rho$ as `True' and the generated state  $\sigma$ as `False', we have  
\begin{equation}\label{eqn:3}
\mathcal{L}(\sigma, \mathcal{D}) = P(\text{True}|\sigma)P(G) + P(\text{False}|\rho)P(R)~,
\end{equation}
where $P(G)$ ($P(R)$) refers to the prior of operating the discriminator  with $\sigma$ ($\rho$), and $P(\text{True}|\sigma)$ ($P(\text{False}|\rho)$ refers to the likelihood that the discriminator   will classify $\sigma$ ($\rho$) as `True' (`False'). Throughout this paper, we set $P(G)=P(R)=1/2$. 

When the discriminator is  assigned to be  positive operator value  measurement (POVM), we have $P(\text{True}|\sigma)=\Tr((\mathbb{I}-\mathcal{D})\sigma)$ and   $P(\text{False}|\rho)=\Tr(\mathcal{D}\rho)$, where the corresponding loss function possesses the convex-concave property. This property immediately indicates that equilibrium always exists,  guaranteed by the theoretical result of the convex optimization \cite{boyd2004convex}. Denoting the optimal solution as $(\sigma^*, \mathcal{D}^*)$ with $(\sigma^*, \mathcal{D}^*) = \arg\max_{\sigma}\min_{\mathcal{D}}\mathcal{L}(\sigma, \mathcal{D})$, we have $\sigma^*=\rho$ with $\mathcal{L}(\sigma^*, \mathcal{D}^*)= 1/2$ at the equilibrium point. 
 
Despite this promising property, the means of applying QuGAL to solve certain problems with  an exponential quantum advantage is unknown. In the following, we propose a general principle for employing QuGAL to solve quantum information processing problems \cite{nielsen2010quantum}.
Let us recall a common strategy  of conventional methods in quantum information processing tasks, e.g., quantum entanglement test or  quantum state discrimination \cite{barnett2009quantum,horodecki2009quantum}. 
A conventional method generally has two steps: Extraction of quantum information into the classical forms, followed by  manipulation of the collected classical data into the desired result. Due to the curse of dimensionality, the number of measurements required to collect a sufficient amount of quantum information grows exponentially with respect to the number of qubits. In contrast, the desired result is often unrelated  to the size of the input and can be represented in a low dimensional space. For instance, the outcome of an  entanglement test 
is binary, indicating whether the input state is entangled or not. 
Applying QuGAL to manipulate quantum data and output the result directly could immediately have exponential quantum advantage, assured by circumventing the enormous amount of quantum measurements required by  conventional methods. 

The key issue, given this observation, is identifying how to reformulate a given quantum information processing task as quantum generative adversarial learning language. Here we  devise a general framework to achieve this goal.  {The central idea behind this framework is  to  conditionally limit the expressive power of the generator or the discriminator for a given task. Adopting the constraint operation aims to distinguish the desired answer from other results, where the training  loss of QuGAL will conditionally converge to the Nash equilibrium if and only if the given input directly relates to the desired answer. In other words, the convex-concave property of the training loss defined in Eqn.~(\ref{eqn:3}) is conditionally preserved when the input directly relates to the desired answer.  We outline the framework as follows. First, we translate a quantum information processing task into a binary decision problem, which can be effectively  achieved by employing the `one-versus-all' strategy \cite{witten2016data}. We then constrain the expressive power of the generator or the discriminator, e.g., the generator can only well approximate all possible inputs corresponding to the desired answer.  If the discrepancy between  the obtained loss and  the optimal loss is below a certain threshold after training,, the desired outcome is obtained.} The restriction of the expressive power particularly depends on the detailed setting and implementation of QuGAL, and we will illustrate how to limit the expressive power in the following sections.  

We illustrate how to tackle entanglement test problems under the proposed QuGAL framework. {Let us briefly review the entanglement test. The entanglement test targets to the detection of  whether a given quantum state is entangled or separable.  
Devising an efficient separability criteria to distinguish entanglement for specific quantum states is fundamentally important for quantum applications. Previous separability criteria can be roughly divided into two classes \cite{horodecki2009quantum}: (1) The separability criteria are efficient but incomplete, that is, some entangled states could be misclassified as separable states, e.g., positive partial transposition  \cite{horodecki1997separability}. (2) The separability criteria are complete in the sense that they are capable of correctly identifying any entangled states at any dimensions, but the computational cost is very high, e.g., symmetric  extension \cite{doherty2004complete}. Besides these two conventional classes, machine learning, which has been explored as an effective tool in many physics problems \cite{carrasquilla2017machine,van2017learning}, also provides novel insights into the tasks of the entanglement test \cite{lu2018separability,ma2018transforming}. A common weakness of the above methods is the requirement for quantum state tomography to construct the classical density matrix, which leads to an  exponential runtime with a linearly increased  number of qubits.}

{QuGAL is a potential candidate for overcoming  the aforementioned issue by directly manipulating the quantum data to circumvent the time-consuming quantum state tomography. The separability rule of QuGAL is reflected by the training loss such that  the given state is classified as being  entangled if the training loss cannot converge to the equilibrium below a threshold after a certain number of training rounds.  The detailed procedure of  employing QuGAL to tackle the  entanglement test is as follows.} Given an unknown quantum state $\rho$, the entanglement test asks if $\rho$ is entangled or not, which is a binary decision problem.
In this setting, we restrict the expressive power of the generator to only generate separable states. If $\rho$ is separable, the convex-concave property of QuGAL is preserved, where $\sigma$ output by the generator can efficiently approximate $\rho$ and the loss converges to the Nash equilibrium very quickly. Otherwise, the convex-concave property is lost and the Nash equilibrium can never be reached.  The fact that the loss of QuGAL can be efficiently calculated by  two outcome measurements immediately gives QuGAL  an exponential quantum advantage over conventional methods, which require  exponential measurements with respect to the number of qubits.

\section{Quantum Multiplicative Matrix Weight}\label{Sec:QMMW}
The convex-concave property of QuGAL enables us to ues the results of  convex optimization,  under the no-regret framework for online learning \cite{hazan2016introduction}, to develop an advanced quantum  algorithm that is capable of fast convergence to the equilibrium.   We first give the definition of $\textit{regret}$ before moving on to explain how no-regret learning algorithms work. Given a sequence of convex loss functions $\{\mathcal{F}_1,\mathcal{F}_2,...,\mathcal{F}_T\}: \mathcal{K}\rightarrow \mathbb{R}$, an algorithm $\mathcal{A}$ selects a sequence of $K_t\in\mathcal{K}$'s with $\mathcal{K}$ being the input space, each of which may only depend on previously observed $\{\mathcal{F}_1,...,\mathcal{F}_{t-1}\}$. The algorithm $\mathcal{A}$ is said to have \textit{no regret} if its minimized regret  $\min_K R_T(K) = \mathcal{O}(T)$, where we define $R_T(K):=\sum_{t=1}^T(\mathcal{F}_t(K_t) - \min_{K\in \mathcal{K}}\mathcal{F}_t(K))$. 

Here we propose a no-regret quantum generative adversarial learning algorithm\textemdash the quantum multiplicative matrix weight (QMMW) algorithm\textemdash to efficiently reconstruct the given mixed state under the fault-tolerant quantum circuits setting.  Conceptually, QMMW is inspired by the multiplicative matrix weight algorithm \cite{kale2007efficient}, an advanced meta-algorithm with the no-regret property that is broadly used in online convex optimization  \cite{hazan2016introduction}.  

Before presenting the technical treatment,  we explicitly define the generated state,  the discriminator,  and  the loss function in Eqn.~(\ref{eqn:3}) used in  QMMW. We denote the output states of the generator and the discriminator as $\sigma_G$ and $\sigma_{D}$, respectively.  The loss function at $t$-th round  $\mathcal{L}(\sigma_G^{(t)}, \sigma_D^{(t)})$ is 
\begin{equation}\label{eqn:QMMW_loss1}
\mathcal{L}(\sigma_G^{(t)}, \sigma_D^{(t)})= \frac{1}{2}\left(\Tr\left(\sigma_D^{(t)}\rho\right) - \Tr\left(\sigma_D^{(t)}\sigma_G^{(t)}\right) \right)+\frac{1}{2} 
~.
\end{equation} 
The physical meaning of this loss function is the evaluation of the overlap  between $\rho$ and $\sigma_G$  using $\sigma_D^{(t)}$ \footnote{As discussed in Section 3, the mixed state will be purified in the implementation of QMMW, where the physical meaning of the loss will be clearer.}.  The convexity of trace calculation implies   that the loss function defined in  Eqn.~(\ref{eqn:QMMW_loss1}) has the convex-concave property, where the optimal solution is $\sigma_G^*=\rho$ with the equilibrium  $\mathcal{L}(\sigma_G^{*}, \sigma_D^{*})=1/2$. 

Following the theoretical results of approximating Nash equilibrium, an  algorithm that has the no-regret property will quickly converge to the equilibrium \cite{farina2017regret}. We denote that the regret for the generator and discriminator as $R_T({\sigma_G})$ and $R_T{(\sigma_D)}$, respectively. We will prove later that  QMMW possesses the  no-regret property with $R_T(\sigma_G)=\mathcal{O}(T)$ and $R_T(\sigma_D) =\mathcal{O}(T)$, which implies that  the optimal result can be efficiently located.   Mathematically, the minimized regret for the generated state $\sigma_G$ during $T$ training rounds  is defined as  
$$R_T(\sigma_G) =-\sum_{t=1}^T\mathcal{L}(\sigma_G^{(t)},\sigma_D^{(t)})+\min_{\sigma_G} \sum_{t=1}^T\mathcal{L}(\sigma_G,\sigma_D^{(t)})~.$$  
Similarly, we can define the regret for the discriminator as  
$$R_T(\sigma_D)=  \sum_{t=1}^T\mathcal{L}(\sigma_G^{(t)},\sigma_D^{(t)})-\min_{\sigma_D} \sum_{t=1}^T\mathcal{L}(\sigma_G^{(t)},\sigma_D)~.$$  

We now explain QMMW. QMMW consists of three steps. First, given a targeted state $\rho$ represented by $N$ qubits, we set the total number of training rounds as $T$ and let the tolerable error  be $\epsilon=\sqrt{N/T}$ with $\epsilon \leq 1/2$. We also initialize the  discriminator $\sigma_D^{(1)}$ as the maximally  mixed state $\sigma_D^{(1)}=\mathbb{I}/2^{N}$. Second, we iteratively update the generator and the discriminator $T$ training rounds.  The update rule for the generated state $\sigma^{(t)}$ at $t$-th round is 
\begin{equation}\label{eqn:upd_G}
\sigma_G^{(t)}=\frac{\exp^{\sum_{\tau=1}^t(-\epsilon \sigma_D^{(\tau)})} }{\Tr\left(\exp^{\sum_{\tau=1}^t(-\epsilon \sigma_D^{(\tau)})}\right)}~.
\end{equation} 
The update rule for the discriminator is
\begin{equation}
	\sigma_D^{(t+1)} =  \frac{e^{\sum_{\tau =1}^t -\epsilon \left( \rho  - \sigma_G^{(\tau)} \right)}}{\Tr{e^{\sum_{\tau =1}^t -\epsilon \left( \rho  - \sigma_G^{(\tau)} \right)}}}~.
\end{equation}

 Third, we calculate the loss  $\mathcal{L}(\bar{\sigma}_D, \bar{\sigma}_D)$ defined in  Eqn.~(\ref{eqn:QMMW_loss1}) with  $\bar{\sigma}_G=\sum_{t=1}^T{\sigma_G^{(t)}}/{T}$  and the  averaged discriminator $\bar{\sigma}_D=\sum_{t=1}^T{\sigma_D}^{(t)}/T$ during $T$ training rounds.
This convergence  rate of QMMW is assured  by the following theorem:
\begin{thm}\label{thm2}
Given a mixed state $\rho$ represented by $N$ qubits, and setting the training rounds as $T$, QMMW  yields 
\begin{equation}
\left|\mathcal{L}(\bar{\sigma}_G, \bar{\sigma}_D) - \mathcal{L}({\sigma_G^*}, {\sigma_D^*})\right|
\leq 3\sqrt{\frac{N}{T}}.
\end{equation}
 \end{thm}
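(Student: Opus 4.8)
The plan is to run the standard online-learning-to-saddle-point reduction: first establish a sublinear regret bound for each of the two update rules, then combine those bounds with the minimax theorem to control the duality gap of the time-averaged pair $(\bar\sigma_G,\bar\sigma_D)$. The first step is to recognise that Eqn.~(\ref{eqn:upd_G}) and the discriminator update are exactly the matrix multiplicative weights (matrix exponentiated gradient) updates on the set of $2^N$-dimensional density operators with learning rate $\epsilon$: the generator reacts to the gain matrices $\{\sigma_D^{(\tau)}\}$, each satisfying $0\preceq\sigma_D^{(\tau)}\preceq\mathbb{I}$, and the discriminator reacts to $\{\rho-\sigma_G^{(\tau)}\}$, each Hermitian with $\|\rho-\sigma_G^{(\tau)}\|\le 1$; the maximally mixed initialisation $\sigma_D^{(1)}=\mathbb{I}/2^N$ is the usual uniform-prior start, and the upper summation limit $t$ in Eqn.~(\ref{eqn:upd_G}) simply means the generator best-responds to the discriminator move already revealed in the current round, which only improves its regret. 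One then invokes the textbook MMW regret guarantee \cite{kale2007efficient,hazan2016introduction}, whose proof (Golden--Thompson together with $e^{x}\le 1+x+x^2$, valid here since $\epsilon\le 1/2$ and $\|M_t\|\le 1$) gives, for a $d$-dimensional instance, a regret of at most $\epsilon\sum_t\Tr(\sigma^{(t)}M_t^2)+\frac{\ln d}{\epsilon}\le \epsilon T+\frac{N\ln 2}{\epsilon}$ in the underlying linear objective; since $\mathcal{L}$ in Eqn.~(\ref{eqn:QMMW_loss1}) depends on each player's variable only through a $\tfrac12$-scaled such linear term, this yields $R_T(\sigma_G),\,R_T(\sigma_D)\le \tfrac12\bigl(\epsilon T+\tfrac{N\ln 2}{\epsilon}\bigr)$.

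Next I would turn the regret bounds into a bound on the equilibrium gap. Since $\mathcal{L}$ in Eqn.~(\ref{eqn:QMMW_loss1}) is affine in each argument over compact convex sets, von Neumann's minimax theorem gives $\max_{\sigma_G}\min_{\sigma_D}\mathcal{L}=\min_{\sigma_D}\max_{\sigma_G}\mathcal{L}=\mathcal{L}(\sigma_G^*,\sigma_D^*)=1/2$. Writing $\hat V:=\tfrac1T\sum_{t=1}^T\mathcal{L}(\sigma_G^{(t)},\sigma_D^{(t)})$, reading each regret in the orientation dictated by its update rule (the generator maximises $\mathcal{L}$, the discriminator minimises it, consistent with the $\arg\max_{\sigma_G}\min_{\sigma_D}$ convention), and using affineness to pull the time averages through $\mathcal{L}$, one gets $\max_{\sigma_G}\mathcal{L}(\sigma_G,\bar\sigma_D)=\hat V+\tfrac{R_T(\sigma_G)}{T}$ and $\min_{\sigma_D}\mathcal{L}(\bar\sigma_G,\sigma_D)=\hat V-\tfrac{R_T(\sigma_D)}{T}$, hence
\begin{equation}
\max_{\sigma_G}\mathcal{L}(\sigma_G,\bar\sigma_D)-\min_{\sigma_D}\mathcal{L}(\bar\sigma_G,\sigma_D)\ =\ \tfrac{R_T(\sigma_G)+R_T(\sigma_D)}{T}.
\end{equation}
By the minimax identity, $\min_{\sigma_D}\mathcal{L}(\bar\sigma_G,\sigma_D)\le 1/2\le\max_{\sigma_G}\mathcal{L}(\sigma_G,\bar\sigma_D)$, and, because $\bar\sigma_D$ and $\bar\sigma_G$ are themselves feasible, $\mathcal{L}(\bar\sigma_G,\bar\sigma_D)$ lies between the same two quantities. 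Thus $\mathcal{L}(\bar\sigma_G,\bar\sigma_D)$ and $1/2=\mathcal{L}(\sigma_G^*,\sigma_D^*)$ both lie in an interval of width $(R_T(\sigma_G)+R_T(\sigma_D))/T$, which gives $|\mathcal{L}(\bar\sigma_G,\bar\sigma_D)-\mathcal{L}(\sigma_G^*,\sigma_D^*)|\le (R_T(\sigma_G)+R_T(\sigma_D))/T$.

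Substituting the first-step bounds and $\epsilon=\sqrt{N/T}$ then gives $\tfrac{R_T(\sigma_G)+R_T(\sigma_D)}{T}\le \epsilon+\tfrac{N\ln 2}{\epsilon T}=(1+\ln 2)\sqrt{N/T}<3\sqrt{N/T}$, the slack in the stated constant comfortably absorbing any cruder estimate of the second-order term or of the prefactor. The routine part is the second step, which is the textbook coupling of two no-regret players. The part I expect to require the most care is the regret bound of the first step: one must select the variant of MMW that genuinely applies to the discriminator's \emph{indefinite} gain matrices $\rho-\sigma_G^{(\tau)}$ (either by an identity shift, which leaves the regret unchanged, or by bounding $\Tr(\sigma_D^{(t)}(\rho-\sigma_G^{(t)})^2)\le 1$ directly in the Golden--Thompson step), check that $\epsilon\le 1/2$ is exactly what the quadratic-approximation inequality needs, and keep the $\tfrac12$ prefactor of Eqn.~(\ref{eqn:QMMW_loss1}) and the orientation of the two regrets straight throughout.
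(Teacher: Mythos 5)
Your proposal is correct and follows essentially the same route as the paper's proof: sublinear matrix-multiplicative-weights regret bounds for both players (the paper derives them via the potential-function argument with the Golden--Thompson inequality and the linearization $e^{-\epsilon A}\preceq \mathbb{I}-\epsilon_1 A$ from \cite{kale2007efficient}, splitting the discriminator's indefinite gain matrices $\rho-\sigma_G^{(t)}$ into positive and negative parts exactly as you anticipate), combined with the standard regret-to-approximate-Nash reduction, which the paper imports as a proposition from \cite{farina2017regret} and you rederive directly from the minimax theorem and affineness. The only deviations are in constant bookkeeping (your $(1+\ln 2)\sqrt{N/T}$ versus the paper's $3\sqrt{N/T}$), and you correctly isolate the two genuine subtleties\textemdash the indefinite gain matrices and the look-ahead summation index in Eqn.~(\ref{eqn:upd_G})\textemdash the first of which the paper handles explicitly and the second of which it glosses over.
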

The proof of Theorem \ref{thm2} is given in the supplementary material SM (A).  

QMMW can be efficiently executed on fault-tolerant  quantum circuits, since both $\sigma_G^{(t)}$ and $\sigma_D^{(t)}$ are Gibbs states that can be prepared by using efficient Gibbs sampling methods \cite{2017arXiv171002581B,van2018improvements}.  We elaborate how to carry out the proposed QMMW algorithm in the supplementary material SM(B). The efficiency of the Gibbs sampling methods proposed in \cite{van2018improvements} presents another attractive advantage of QMMW:
\begin{thm}\label{thm:exp}
 Given an $N$-qubit state, let $U_{\rho}$ be the unitary that prepares the purification state of $\rho$. Denote $T$ as the total  number of training rounds. If there is  quantum query access to $U_{\rho}$, the computation cost of  the QMMW algorithm is   $\mathcal{O}(N^{3}T^4)$.
\end{thm}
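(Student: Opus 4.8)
The plan is to charge the runtime of QMMW to two ingredients: (i) constructing, for $t=1,\dots,T$, the state-preparation circuits for $\sigma_G^{(t)}$ and $\sigma_D^{(t)}$, which the update rules present as purified Gibbs states; and (ii) estimating, in the final step, the two overlaps $\Tr(\bar\sigma_D\rho)$ and $\Tr(\bar\sigma_D\bar\sigma_G)$ that make up the reported loss $\mathcal{L}(\bar\sigma_G,\bar\sigma_D)$ of Eqn.~(\ref{eqn:QMMW_loss1}). The first thing I would record is that the updates have the lazy / follow-the-leader form: Eqn.~(\ref{eqn:upd_G}) gives $\sigma_G^{(t)}=e^{-H_G^{(t)}}/\Tr(e^{-H_G^{(t)}})$ with $H_G^{(t)}=\epsilon\sum_{\tau=1}^{t}\sigma_D^{(\tau)}$, and the discriminator rule gives $\sigma_D^{(t+1)}=e^{-H_D^{(t+1)}}/\Tr(e^{-H_D^{(t+1)}})$ with $H_D^{(t+1)}=\epsilon\sum_{\tau=1}^{t}(\rho-\sigma_G^{(\tau)})$, so no measurement or tomography is needed \emph{during} the $T$ rounds — only at the end. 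Since every summand is a density matrix, hence of operator norm $\le 1$, and $\epsilon=\sqrt{N/T}$, I get $\|H_G^{(t)}\|\le\epsilon t\le\sqrt{NT}$ and $\|H_D^{(t)}\|\le 2\sqrt{NT}$, and each Hamiltonian is an explicit linear combination of $O(t)=O(T)$ density matrices: $\rho$, which is prepared by the oracle $U_\rho$ by hypothesis, together with the already-constructed $\sigma_G^{(\tau)},\sigma_D^{(\tau)}$.

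Next I would invoke the Gibbs-sampling subroutine of \cite{van2018improvements}, in the regime used by the quantum-SDP literature \cite{2017arXiv171002581B}: given access to the term-preparation unitaries of a Hamiltonian $H=\sum_j H_j$ on $N$ qubits with $\|H_j\|\le 1$ and $\|H\|\le\Lambda$, one can produce a purification of $e^{-H}/\Tr(e^{-H})$ to trace distance $\delta$ using a number of queries to those unitaries (and auxiliary gates) that is polynomial in $\Lambda$, in the number of terms, in $N$, and in $1/\delta$; the key point — the reason no $2^N$ factor appears — is that each term here is a density matrix, so sample-based (density-matrix) Hamiltonian simulation applies. Plugging in $\Lambda=O(\sqrt{NT})$, $O(T)$ terms, and taking $\delta$ to be a fixed inverse polynomial in $\sqrt{N/T}$ (see the obstacle below for why this suffices), the cost of one Gibbs preparation is $\mathrm{poly}(N,T)$, and one checks that this polynomial, after the substitution and the summation below, evaluates to $\mathcal{O}(N^{3}T^{4})$.

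I would then close the recursion. Because $\sigma_G^{(t)}$ is built from $\sigma_D^{(1)},\dots,\sigma_D^{(t)}$ and each $\sigma_D^{(\tau)}$ from earlier $\sigma_G^{(\tau')}$, the states can be produced in the natural order $\sigma_D^{(1)},\sigma_G^{(1)},\sigma_D^{(2)},\sigma_G^{(2)},\dots$, and I would cache the at most $2T$ circuits built so far rather than re-deriving them, so that round $t$ costs $O(t)$ term-preparations times the per-Gibbs-state cost above. Summing over $t=1,\dots,T$ and over both players gives the total circuit-construction cost; adding the final loss estimation — each of $\Tr(\bar\sigma_D\rho)$ and $\Tr(\bar\sigma_D\bar\sigma_G)$ to additive error $O(\sqrt{N/T})$ via the SWAP test, i.e.\ $O(T/N)$ copies of the averaged states, which is dominated — and substituting $\epsilon=\sqrt{N/T}$ collapses everything to $\mathcal{O}(N^{3}T^{4})$. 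This last step is the routine bookkeeping I would defer until the constituent bounds are in hand.

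The principal difficulty is the error analysis of the recursion, not the counting. A trace-distance error $\delta$ incurred when preparing an early state enters every later Hamiltonian as a summand scaled by $\epsilon$, hence perturbs every later Gibbs state, and I must show this propagation is non-amplifying — e.g.\ via the Lipschitz (in trace distance) dependence of $e^{-H}/\Tr(e^{-H})$ on the bounded-norm $H$ — so that the cumulative perturbation of the reported loss $\mathcal{L}(\bar\sigma_G,\bar\sigma_D)$ stays $o(\sqrt{N/T})$ and Theorem~\ref{thm2} remains applicable to the approximately-prepared states. Making this work while $\delta$ is only inverse-polynomially small (so the $1/\delta$ factors do not spoil the exponents) is what pins the bound to exactly $N^{3}T^{4}$; a secondary point is to confirm that the Gibbs sampler of \cite{van2018improvements} applies verbatim when the Hamiltonian terms are handed over as purification unitaries of density matrices rather than as block-encodings.
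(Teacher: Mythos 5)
Your overall route is the same as the paper's: reduce the cost of QMMW to iterated preparation of the purified Gibbs states $\sigma_G^{(t)}$ and $\sigma_D^{(t)}$, observe that each exponent is $-\beta H$ with $H$ a normalized linear combination of density matrices (accessible through $U_\rho$ and the previously built samplers) and $\beta=O(\epsilon t)=O(\sqrt{NT})$, invoke the purified-Gibbs-sampler of \cite{van2018improvements}, and sum over the $T$ rounds. The recursive/caching structure you describe is exactly the pair of subroutines $O_{\sigma_G^T}$, $O_{\sigma_D^T}$ the paper constructs in SM~(B).

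The genuine gap is that you never pin down the one quantitative input on which the claimed exponents rest. The theorem's entire content is the bound $\mathcal{O}(N^3T^4)$, yet your argument reaches it by asserting that an unspecified ``polynomial in $\Lambda$, the number of terms, $N$, and $1/\delta$ \ldots evaluates to $\mathcal{O}(N^3T^4)$'' and deferring the substitution as routine bookkeeping. Without the concrete query complexity of the Gibbs sampler this is circular: different (all perfectly plausible) polynomial dependences on $\beta$ and on the number of terms would yield different final exponents. The paper closes this by citing Theorem~\ref{thm:22} of \cite{van2018improvements}, whose cost is $\mathcal{O}(\beta^{3.5}/\delta)$ queries to the purification unitaries, then bounding $\beta^{3.5}\le(2\sqrt{NT})^{3.5}=\mathcal{O}((NT)^{3})$ per round and multiplying by $T$ rounds and by a constant $1/\delta$ to get $\mathcal{O}(N^3T^4)$; note in particular that the per-round cost there does \emph{not} scale with the number $O(t)$ of terms (the $t$ summands are absorbed into a single pair $\mu^{\pm}$ of purified mixtures prepared by one query to the cached subroutine), so your ``$O(t)$ term-preparations times the per-Gibbs-state cost'' accounting would, if taken literally, change the exponent of $T$. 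On the other hand, your flagged obstacle about how the trace-distance error $\delta$ of early Gibbs states propagates through later Hamiltonians is a legitimate concern that the paper itself does not address (it simply sets $\delta$ to a constant and dismisses $\theta$), so you should not regard that as the missing step relative to the paper's argument --- the missing step is the explicit $\beta^{3.5}/\delta$ bound and the arithmetic that follows from it.
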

The proof of Theorem \ref{thm:exp} is given in the supplementary material SM (C). We remark that allowing access the Gibbs sampler $U_{\rho}$ has also been used in \cite{2017arXiv171002581B}.

\section{QuGANs with multiplicative weight training method}\label{Sec:QuGAN}
{The investigation of applying QuGANs to tackle quantum information processing problems is of practical interest in the near term when there are only limited available qubits and shallow quantum circuit depth \cite{preskill2018quantum}.  } Although several studies have confirmed the feasibility of using QuGANs to achieve  certain tasks, the variational optimization  method collapses the desired convex-concave property and heavily challenges the performance of QuGANs. {The disappearance of the convex-concave property results in an inevitable difficulty, since  the optimization may get  stuck in local minima.}   This topic has been widely investigated in classical GANs \cite{zhang2016understanding}. Inspired by the weighted training algorithm proposed by \cite{pantazis2018training},  which has demonstrated its effectiveness in classical GANs, we propose the multiplicative weight training method \cite{pantazis2018training} to relieve the training difficulty in QuGANs. The proposed training method can be  seamlessly embedded into  advanced optimization   algorithms used to train  parameterized quantum circuits (PQCs).

Before illustrating how the multiplicative weight training method works, we first set up the QuGAN used in this paper. The generator $U_G$ and discriminator $U_D$ of our  QuGAN are two trainable unitaries that are implemented by PQCs. Mathematically,  the trainable unitary $U_G$ and $U_D$ are  defined as 
\begin{equation}\label{eqn:PQC_U_G}
U_G=\prod_{i=1}^{L_1}U(\bm{\theta_i}), ~ U_D=\prod_{i=1}^{L_2}U(\bm{\gamma_i})	~,
\end{equation}
where $L_1$ ($L_2$) refers to the number of blocks in $G$ ($D$) and each block $U(\bm{\theta_i})$ ($U(\bm{\gamma}_i)$) has an identical arrangement of quantum gates.  Suppose that the target state $\rho$ is represented by $N$ qubits,  the generated state is formulated as $\ket{\phi}=U_G\ket{0}^{\otimes N'}$ with $N'=N+N_a$ and $N_a$ being the number of ancillary qubits \footnote{If the given state is a pure state, we have $N_{a}=0$. The value of $N_a$ is no larger than $N$.}. The generated mixed state $\sigma_G$ can be obtained by partial tracing   the ancillary system, i.e., 
\begin{equation}\label{eqn:Q_GAN_6}
	\sigma_G= \Tr_a(U_G(\ket{0}\bra{0})^{\otimes N'}U_G^{\dagger}),
\end{equation}
supported by Stinespring’s dilation theorem \cite{nielsen2010quantum}. The discriminator of our QuGANs is defined as 
\begin{equation}\label{eqn:QGAN_U_D}
M_D  = U_D^{\dagger}
(\mathbb{I}\otimes E_F)U_D~,
\end{equation}
where a two-outcome positive-operator valued measurement $\mathcal{D}$ defined in Eqn.~(\ref{eqn:3})  is  reformulated as $U_D$  followed by a partial measurement  $E_F=\ket{0}\bra{0}$ on an ancillary qubit. 
Following the loss function of QuGAL defined in Eqn.~(\ref{eqn:3}), the loss function of QuGAN yields 	
\begin{equation}\label{eqn:5} 
	\mathcal{L}(U_G,U_D)=
	 \Tr((M_D(\rho\otimes\ket{0}\bra{0})) P(R) + \Tr((\mathbb{I}-M_D)(\sigma_G\otimes \ket{0}\bra{0}))P(G)~,
	\end{equation}
	where $\sigma_G$ is defined in Eqn.~(\ref{eqn:Q_GAN_6}) and $M_D$ is defined in Eqn.~(\ref{eqn:QGAN_U_D}). The loss function of QuGAN gives the following theorem:
\begin{lem}\label{thm1}
The loss function  $ \mathcal{L}(U_G,U_D)$ defined in Eqn.~(\ref{eqn:5}) has the convex-concave property with the equilibrium value $\mathcal{L}(U_G^*,U_D^*)=1/2$.
\end{lem}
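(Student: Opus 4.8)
The plan is to reduce the statement, which as written involves the unitaries $U_G$ and $U_D$, to the equivalent claim about the \emph{induced} generated state and the \emph{induced} two‑outcome effect. On those objects the feasible sets are convex and compact, the loss is affine in each argument separately, and the result then follows from a minimax theorem together with a one‑line value computation.

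First I would simplify Eqn.~(\ref{eqn:5}). Using $P(R)=P(G)=1/2$ and $\Tr(\sigma_G\otimes\ket{0}\bra{0})=\Tr(\sigma_G)=1$,
\begin{equation*}
\mathcal{L}(U_G,U_D)=\frac{1}{2}+\frac{1}{2}\Tr\!\left(M_D\,\big((\rho-\sigma_G)\otimes\ket{0}\bra{0}\big)\right).
\end{equation*}
Setting $\mathcal{D}:=(\mathbb{I}\otimes\bra{0})\,M_D\,(\mathbb{I}\otimes\ket{0})$ for the ancilla-$\ket{0}$ block of $M_D$, one has $\Tr(M_D(X\otimes\ket{0}\bra{0}))=\Tr(\mathcal{D}X)$ for every operator $X$ on the $N$-qubit register, and $0\le M_D\le\mathbb{I}$ forces $0\le\mathcal{D}\le\mathbb{I}$, so
\begin{equation*}
\mathcal{L}(U_G,U_D)=\frac{1}{2}\Tr(\mathcal{D}\rho)+\frac{1}{2}\Tr\!\left((\mathbb{I}-\mathcal{D})\sigma_G\right),
\end{equation*}
i.e.\ exactly the QuGAL loss of Eqn.~(\ref{eqn:3}) rewritten in terms of the pair $(\sigma_G,\mathcal{D})$. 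Next I would pin down the feasible sets. Since $U_G\ket{0}^{\otimes N'}$ can be any pure state on $N'=N+N_a$ qubits and any $N$-qubit density matrix has a purification using at most $N_a\le N$ ancilla qubits, the map $U_G\mapsto\sigma_G=\Tr_a\!\big(U_G(\ket{0}\bra{0})^{\otimes N'}U_G^{\dagger}\big)$ of Eqn.~(\ref{eqn:Q_GAN_6}) sweeps out the whole convex, compact set $\mathcal{S}_N$ of $N$-qubit states (Stinespring's dilation theorem, already invoked for Eqn.~(\ref{eqn:Q_GAN_6})). Likewise, by Naimark's dilation every effect with $0\le\mathcal{D}\le\mathbb{I}$ is realized by some $U_D$ of the form in Eqn.~(\ref{eqn:QGAN_U_D}) on the register plus one ancilla qubit, so $\mathcal{D}$ ranges over the whole convex, compact set $\mathcal{E}_N=\{\mathcal{D}:0\le\mathcal{D}\le\mathbb{I}\}$ of two-outcome POVM effects. (As is implicit throughout, the circuits are assumed expressive enough to realize the required unitaries; for a fixed ansatz depth the statement holds over the convex hull of the reachable states and effects.)

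It then remains to read off the conclusion. For fixed $\mathcal{D}\in\mathcal{E}_N$ the map $\sigma_G\mapsto\mathcal{L}$ is affine on $\mathcal{S}_N$, hence convex; for fixed $\sigma_G\in\mathcal{S}_N$ the map $\mathcal{D}\mapsto\mathcal{L}$ is affine on $\mathcal{E}_N$, hence concave. As $\mathcal{S}_N,\mathcal{E}_N$ are convex and compact and $\mathcal{L}$ is continuous, a minimax theorem for convex-concave functions on compact convex sets \cite{boyd2004convex} gives a saddle point with $\min\max\mathcal{L}=\max\min\mathcal{L}$. For the value, substituting $\sigma_G=\rho$ gives $\mathcal{L}=\tfrac12\Tr(\mathcal{D}\rho)+\tfrac12\Tr((\mathbb{I}-\mathcal{D})\rho)=\tfrac12$ for every $\mathcal{D}$, while substituting $\mathcal{D}=\mathbb{I}/2$ gives $\mathcal{L}=\tfrac14\Tr(\rho)+\tfrac14\Tr(\sigma_G)=\tfrac12$ for every $\sigma_G$; sandwiching $\max\min\mathcal{L}\le\min\max\mathcal{L}$ between these two constants forces the equilibrium value $\mathcal{L}(U_G^*,U_D^*)=1/2$, attained at $\sigma_G^*=\rho$, $\mathcal{D}^*=\mathbb{I}/2$ (and is independent of which player minimizes versus maximizes).

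The one genuinely nontrivial point — and the step I expect to be the main obstacle — is the passage from the non-convex unitary parametrization to the convex induced picture: one must justify carefully that $\sigma_G$ and $\mathcal{D}$ genuinely range over the full sets $\mathcal{S}_N$ and $\mathcal{E}_N$ (Stinespring and Naimark, the ancilla budget $N_a\le N$, and the expressiveness assumption on the PQCs), because on the raw unitary manifolds the loss is quadratic rather than affine and ``convex-concave'' would be meaningless. Everything afterwards — affinity in each slot, the minimax invocation, and the value computation — is routine.
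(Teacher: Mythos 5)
Your proposal is correct and follows essentially the same route as the paper's proof in SM(D): both reduce $\mathcal{L}(U_G,U_D)$ to the QuGAL loss of Eqn.~(\ref{eqn:3}) in the induced variables $(\sigma_G,\mathcal{D})$ via Stinespring and Naimark dilation, and then read off convexity/concavity from the linearity of the trace. You go somewhat further than the paper by explicitly verifying surjectivity onto the full convex sets of states and effects and by computing the equilibrium value $1/2$ via the sandwich $\sigma_G=\rho$, $\mathcal{D}=\mathbb{I}/2$ (the paper defers that value to the discussion around Eqn.~(\ref{eqn:3})), but the underlying argument is the same.
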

The proof of Theorem \ref{thm1} is given in the supplemental material SM (D). 


We now illustrate how to use the  multiplicative weight training method to facilitate the optimization of QuGAN. Intuitively, this method aims to put more weight on generated states that are more likely to fool the discriminator in updating $U_G$.  We summarize the multiplicative weight training method  in Algorithm \ref{alg:1}. 

\begin{algorithm}
\caption{The Multiplicative Weight Training Method}
\label{alg:1}
\hspace*{\algorithmicindent} \textbf{Input}:  $T$; $K$;  $\alpha\in \mathbb{R}$; $\eta\in(0,1)$. \\
\hspace*{\algorithmicindent} \textbf{Output}: The trainable parameters $\bm{\theta}^{(T)}$ and $\bm{\gamma}^{(T)}$ .
\begin{algorithmic}[1]
\State Initialize trainable parameters  $\bm{\theta}^{(0)}$, $\bm{\gamma}^{(0)}$;  \Comment{Randomly sampled from uniform distribution.}
 \For{\texttt{$t=1;$ $t\leq T;$ $t\leftarrow t+1$}}
        \State  $\tilde{\bm{\theta}}^{(1)}\leftarrow \bm{\theta}^{(t)}$ and $\tilde{\bm{\gamma}}^{(1)}\leftarrow \bm{\gamma}^{(t)}$ \Comment{Initialize $\tilde{\bm{\theta}}^{(k)}$ and  $\tilde{\bm{\gamma}}^{(k)}$ for $k=1$}
        	 \For{\texttt{$k=1;$ $k\leq K;$ $k\leftarrow k+1$}}
        \State  $\{\mathcal{L}(U_G^{(k)},U_D^{(k)}),  \nabla_{\tilde{\bm{\theta}}} \mathcal{L}(U_G^{(k)},U_D^{(k)})\}$   \Comment{Record the training loss and gradients}
        \State $\tilde{\bm{\theta}}^{(k+1)}\leftarrow\tilde{\bm{\theta}}^{(k)} + \alpha \nabla_{\tilde{\bm{\theta}}} \mathcal{L}(U_G^{(k)},U_D^{(k)})$    \Comment{Update the virtual parameters $\tilde{\bm{\theta}}$ }
        \State $\tilde{\bm{\gamma}}^{(k+1)}\leftarrow\tilde{\bm{\gamma}}^{(k)} - \alpha \nabla_{\tilde{\bm{\gamma}}} \mathcal{L}(U_G^{(k)},U_D^{(k)})$ \Comment{Update the virtual parameters $\tilde{\bm{\gamma}}$ }
  \EndFor
  \State $w_k=\eta\frac{{\mathcal{L}(U_G^{(k)},U_D^{(k)})}}{\sum_{k=1}^K{\mathcal{L}(U_G^{(k)},U_D^{(k)})}}$ \Comment{Calculate the multiplicative weights  $\{w_k\}_{k=1}^K$}
  \State $\bm{\theta}^{(t+1)}\leftarrow \bm{\theta}^{(t)} + \alpha \sum_{k=1}^Kw_k \nabla_{\bm{\theta}} \mathcal{L}(U_G^{(k)},U_D^{(k)})$ \Comment{ Update the trainable parameters for $U_G$}
   \State $\bm{\gamma}^{(t+1)}\leftarrow \bm{\gamma}^{(t)} -  \alpha \nabla_{\bm{\gamma}} \mathcal{L}(U_G^{(t)},U_D^{(t)})$ \Comment{Update the trainable parameters for $U_D$}
  \EndFor
\end{algorithmic}
\end{algorithm}

The four hyper-parameters of the multiplicative weight training method  are the total number of training rounds $T$, the total number of inner iterations  $K$,  the learning rate $\alpha\in \mathbb{R}$,  and the scale parameter $\eta\in(0,1)$. At each training round $t$ with $t\in[T]$, we introduce $K$ inner iterations to obtain a better  gradient for updating $\bm{\theta}^{(t+1)}$.  For ease of understanding, we denote the updated parameters in $K$ iterations as $\tilde{\bm{\theta}}$ and $\tilde{\bm{\gamma}}$. As indicated in  Lines $5$-$7$ of Algorithm \ref{alg:1}, we iteratively update $\tilde{\bm{\theta}}$ and $\tilde{\bm{\gamma}}$, and  record a set of training losses $\{\mathcal{L}(U_G^{(k)},U_D^{(k)})\}_{k=1}^K$ and a set of gradients $\{\nabla_{\bm{\theta}} \mathcal{L}(U_G^{(k)},U_D^{(k)})\}_{k=1}^K$.  After conducting the inner iterations, we calculate the multiplicative weights and  employ them to update $\bm{\theta}$, as indicated by Lines $9$-$10$ in Algorithm \ref{alg:1}. We note that the multiplicative weight training method differs from the weighted training algorithm proposed in \cite{pantazis2018training}. The major difference is in the mechanism of QuGAN and classical GANs, i.e.,  classical GANs support nonlinear mapping, whereas QuGAN can only conduct linear mapping (see more details about classical  GANs in the supplementary material SM (E)).

\section{The application of QMMW and QuGANs for entanglement test}\label{Sec:QMM_Qu_app}
Following the observation in Section \ref{Sec:prob_app}, a  core ingredient of employing QuGAL to tackle a given quantum information processing problem is to conditionally restrict the expressive power of the generator or discriminator. The restriction method is varied for different settings and  implementations of QuGAL.  In this section, we discuss how  to conditionally restrict the expressive power of the generator or discriminator for QMMW and QuGAN can be conditionally restricted to tackle a given quantum information processing problem. 

For QMMW,  an extra `constraint' step should be involved in the update rule to restrict the expressive power. Naive QMMW is capable of approximating any quantum state without the imposition of any constraint,  as proved in Theorem \ref{thm2}.  The `constraint' step  ensures   that only the desired answer formulated in Section \ref{Sec:prob_app} can be efficiently approximated by QMMW. Two standard rules govern the design of the  `constraint' step, namely, that  it does not destroy the no-regret property of QMMW and that  it can be efficiently implemented by quantum operations. 

For QuGAN, the restriction of the expressive power can be achieved by adjusting the quantum circuit structure, so that only the desired answer formulated in Section \ref{Sec:prob_app} can be efficiently simulated. In particular, the arrangement of  the quantum gates of each block $U(\bm{\theta_i})$ and $U(\bm{\gamma_i})$  defined in Eqn.~(\ref{eqn:PQC_U_G}) should be redesigned. Although QuGAN cannot guarantee an  effective convergence rate as QMMW does, it may still have  quantum advantages, since QuGAN does not demand expensive measurements and can be efficiently implemented on near-term quantum devices.

 To facilitate understanding, we show how  to use QMMW and QuGAN can be used to  accomplish the entanglement test for a bipartite  pure state. The formal definition of  the separable bipartite pure state as follows \cite{horodecki2009quantum}.  Suppose that a given bipartite pure state $\ket{\Psi}_{AB}$ is represented by $N_A+N_B$ qubits, we say that  $\ket{\Psi}_{AB} \in\mathcal{H}_A\otimes\mathcal{H}_B$ is separable  if it can be written as	$\ket{\Psi}_{AB}= \ket{\phi}_{A}\ket{\phi}_{B}$ with $\ket{\phi}_{A}\in\mathcal{H}_A$ ($\ket{\phi}_{B}\in\mathcal{H}_B$).  

When QMMW is employed to distinguish entanglement from a bipartite pure state, we impose an  `constraint' step in updating the generated state. We define the target state as $\rho_{AB}=\ket{\Psi}\bra{\Psi}_{AB}$ at each training rounds, and two copies of $\sigma_G$ are generated as defined in Eqn.~(\ref{eqn:upd_G}). The `constrained' step refers to a partial trace step, i.e., by partial trace system $A$ for the first copy and system $B$ for the second copy, we have the product state $\Tr_A{(\sigma_G)}\otimes\Tr_B{(\sigma_G)}$. The integration of the `constraint' step and naive QMMW naturally results in Nash equilibrium  being  reached if and only if the input state is separable, since the generated state must be separable. Meanwhile, the `constraint' step satisfies the two standards rules. It is easy to prove that the no-regret property of the varied QMMW is conserved. The partial trace can be executed with $\mathcal{O}(1)$ complexity.    
  
When QuGAN is employed to distinguish entanglement from a bipartite pure state, we redesign the arrangement of quantum gates in each block of $U_G$ defined in Eqn.~(\ref{eqn:PQC_U_G}). No CNOT gate exists whose controlled qubit is in system $A$ and whose target qubit is in system $B$. The detailed quantum circuit architecture is shown in the right panel of Figure \ref{fig:Q-DC-GAN}. The modified quantum circuit structure indicates that Nash equilibrium can be reached if and only if the input state is separable,  since $U_G$ can only generate  a separable state.

 \begin{figure}[H]
\centering
	\begin{adjustbox}{width=0.96\textwidth}
	\begin{tikzcd}[row sep={0.8cm,between origins}]
	\lstick[wires=3]{$\ket{0}^{\otimes N_A}$}  & \qw    &	\gate[style={fill=pink!20},wires=6]{U_G/U_{\rho}}	&\gategroup[wires=7,steps=1,style={dashed,rounded corners,fill=blue!20, inner xsep=2pt},background]{}\gate[style={fill=yellow!20},wires=7]{U_D}  &\qw \\
											   &\vdots      &  \hphantom{wide label}	&\vdots	 &\qw \\
											   & \qw    &\qw 	&\qw					 &\qw	\\
	\lstick[wires=3]{$\ket{0}^{\otimes N_B}$}  & \qw   &\qw 	&\qw \\
											   & \vdots   	  &		&\hphantom{wide label}	& \\
											   & \qw    &\qw	&\qw &\qw \\
	\lstick{$\ket{0}$} 						   & \qw     &\qw  &\qw  &\qw &\meter{0}
		\end{tikzcd}

	\begin{tikzcd}[row sep={0.8cm,between origins}]
	\lstick[wires=3]{$\ket{0}^{\otimes N_A}$}  & \qw    &\gategroup[wires=6,steps=5,style={dashed,rounded corners,fill=pink!20, inner xsep=2pt},background]{$U_G(\bm{\theta})$} \gate[style={fill=yellow!20}]{U} & \ctrl{2}  &\gate[style={fill=yellow!20}]{U}   &\ctrl{2}	&\gategroup[wires=7,steps=5,style={dashed,rounded corners,fill=blue!20, inner xsep=2pt},background]{$U_D(\bm{\theta})$}\gate[style={fill=yellow!20}]{U} &\ctrl{2} &\qw &\qw &\qw &\qw \\
											   &\vdots    &  & &  &  \hphantom{wide label}\hdots 	&\vdots	 &  & &  & \\
											   & \qw   &\gate[style={fill=yellow!20}]{U}  	&\targ{}  &\gate[style={fill=yellow!20}]{U}  &\targ{} &\gate[style={fill=yellow!20}]{U}				 &\targ{}  &\ctrl{1} &\qw &\qw	&\qw	\\
	\lstick[wires=3]{$\ket{0}^{\otimes N_B}$}  & \qw   &\gate[style={fill=yellow!20}]{U} 	&\ctrl{2}  &\gate[style={fill=yellow!20}]{U}  &\qw &\gate[style={fill=yellow!20}]{U}  &\qw &\targ{} &\qw &\qw &\qw \\
											 &\vdots    &  & &  &  \hphantom{wide label}\hdots 	&\vdots	 &  &  &\hdots & &\\
											   & \qw    &\gate[style={fill=yellow!20}]{U} 	&\targ{}  &\gate[style={fill=yellow!20}]{U}  &\qw &\gate[style={fill=yellow!20}]{U} &\qw &\qw &\qw &\ctrl{1} &\qw\\
	\lstick{$\ket{0}$} 						   & \qw     &\qw  &\qw  &\qw &\qw &\gate[style={fill=yellow!20}]{U} &\qw &\qw &\qw &\targ{} &\meter{0}
		\end{tikzcd}

	\end{adjustbox}
		\caption{\small{The quantum circuit of QuGAN to accomplish the entanglement test for bipartite pure states. In the left panel,  $U_{\rho}$ (or $U_G$) is selected to to produce the real (or generated) state with prior $P(R)$ (or $P(G)$. In the right panel,  the circuit architecture of $U_G$ and $U_D$ is expanded. The notation $U$ is defined as $U=R_X\circ R_Y \circ R_Z$, where $R_X$, $R_Y$, and $R_Z$ are trainable parameterized single qubit gates along $X$, $Y$, $Z$ axis.}}
\label{fig:Q-DC-GAN}
\end{figure}
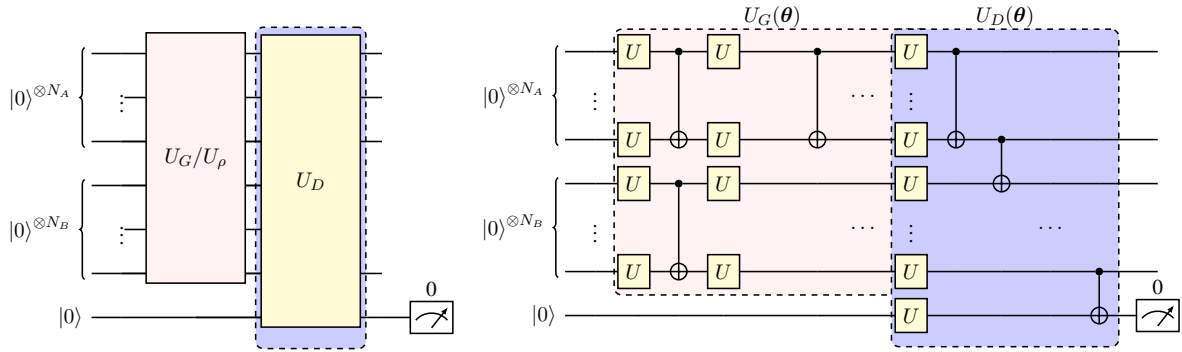
 
 It is valuable to compare QuGAN with another advanced method, the self-testing  method \cite{harrow2013testing}, which can also accomplish the pure  entanglement test task. The core ingredient of self-testing is the controlled-SWAP test; however, the controlled-swap test has several disadvantages under near-term devices \cite{benedetti2018adversarial}. To perform the self-testing method,  $2N+1$ qubits are required,  two copies should be accessed  simultaneously, and the ability to conduct nontrivial controlled gates and error correction is required.  In contrast to the self-testing method, QuGAN can flexibly  select the number of controlled gates, which is more suitable for near-term quantum devices.  

\section{numerical simulations}\label{Sec:Simu}
QMMW is a powerful tool for approximating  a given state. We  validate its performance by approximating 
a separable mixed state $\rho_{sep}=\frac{1}{2}\ket{0000}\bra{0000}+\frac{1}{2}\ket{1111}\bra{1111}$. The total number of training rounds is set as $T=400$ and $T=1600$, respectively. As  illustrated in Figure \ref{fig:QMMW}, the final training loss for $T = 400$ is $0.561$ with fidelity of  $0.929$. The final training loss for $T = 1600$ is $0.532$ with fidelity of $0.965$. The simulation results  indicate that the training loss rapidly  converges to the equilibrium value and the fidelity between the generated state and $\rho_{sep}$ tends to be $1$ with increased $T$. The simulation results are in accordance with the conclusion of Theorem \ref{thm2}, where the theoretical results are  $1/2+3\sqrt{4/400}=0.7$ and  $1/2+3\sqrt{4/1600}=0.65$, respectively.  The numerical simulations are implemented in  Python in conjunction with  QuTiP \cite{johansson2012qutip}.
 
\begin{figure}[h]
\centering
\includegraphics[width=0.9\textwidth]{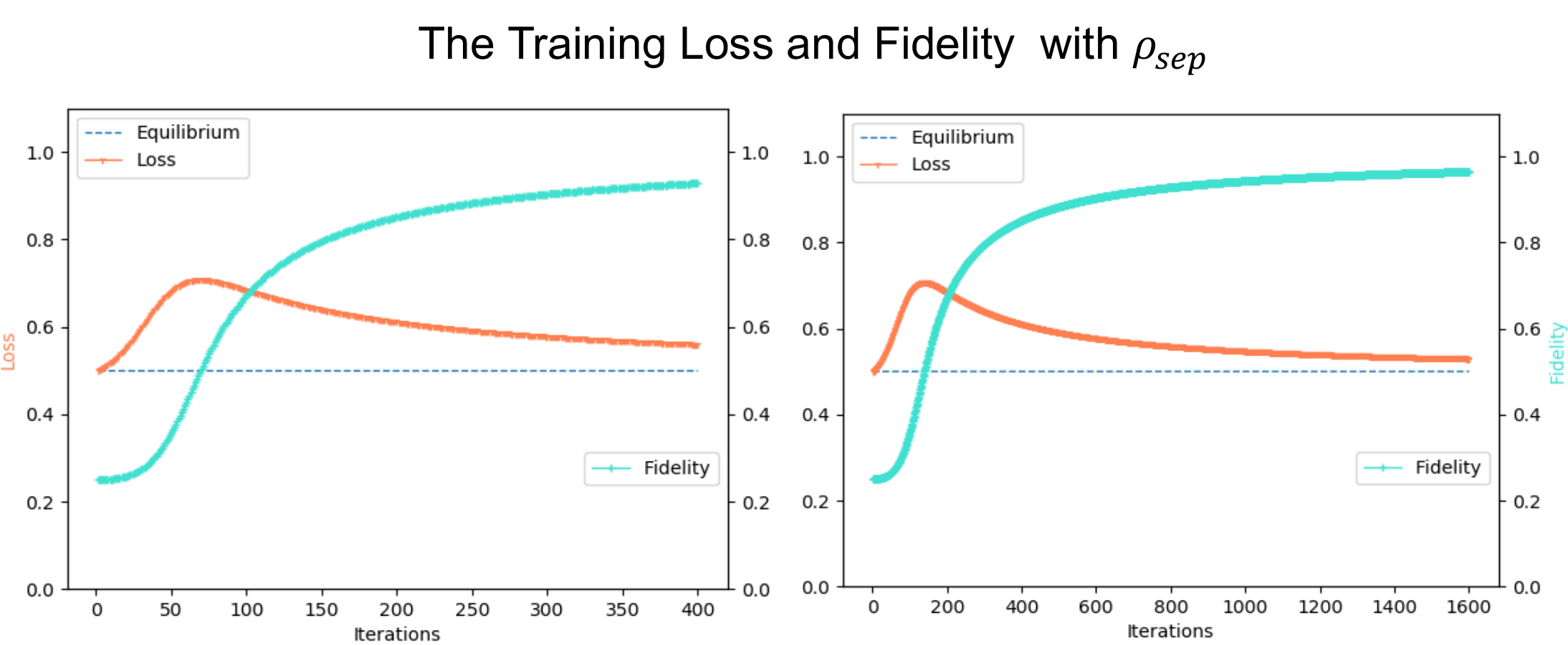}
\caption[small]{The left panel is the simulation result of QMMW with setting $T=400$. The right panel is the simulation result of QMMW with setting $T=1600$. }
\label{fig:QMMW}
\end{figure}

 We then benchmark the performance of the QuGANs to accomplish the entanglement test for bipartite pure states. The detailed procedure for constructing QuGAN is as follows. The trainable parameters  $\bm{\theta}$ (for $U_G$) and $\bm{\gamma}$ (for $U_D$) are randomly initialized and updated by the  zero-order differential method  \cite{mitarai2018quantum}. We set the total number of  training rounds $T$ as $500$. The prior defined in Eqn.~(\ref{eqn:5}) is set as $P(G)=P(R)=1/2$. The detailed quantum circuit structure is illustrated in Figure {\ref{fig:Q-DC-GAN}}. The number of blocks required to implement $U_G$ and $U_D$ as defined in Eqn.~(\ref{eqn:PQC_U_G}) is set as $L_1=7$ and $L_2=3$, respectively. The expressive power of $U_G$ is constrained as explained in Section \ref{Sec:QMM_Qu_app}. The quantum circuit architecture is demonstrated in Figure \ref{fig:Q-DC-GAN}. All numerical simulations are implemented in Python in conjunction with the PyQuil library \cite{smith2016practical}.  
 
 We now employ QuGAN to accomplish the entanglement test for two bipartite pure states, i.e., a separable state  $\ket{\Psi}=(\ket{00}_A+\ket{10}_A)\otimes\ket{00}_B/\sqrt{2}$ and an entangled state $\ket{\text{GHZ}}=(\ket{00}_A\otimes \ket{00}_B+\ket{11}_A\otimes \ket{11}_B)/\sqrt{2}$, where $A$ and $B$ refer to the bipartite system.  When the input state is separable state $\ket{\Psi}$,  the training loss oscillates around the optimal value after around $100$ steps and ranges  from $0.444$ to $0.559$, as shown in the outer plot. The corresponding fidelity between the target state and the generated state is always larger than $0.702$. The  training loss for the entanglement state case is far away from the optimal value, which oscillates  around $0.850$ after $300$ steps, as shown in the inner plot. The fidelity between the generated state and the given state $\ket{\text{GHZ}}$ is always below $0.250$. The simulation results echo  the analysis in Section \ref{Sec:QMM_Qu_app}. The simulation results are illustrated in Figure \ref{fig:GHZ_loss}. To accomplish the simulation,  QuGAN requires $143$ single and two qubit quantum gates, while self-testing method requires $240$ quantum gates.

\begin{figure}[H]
\centering
\includegraphics[width=0.5\textwidth]{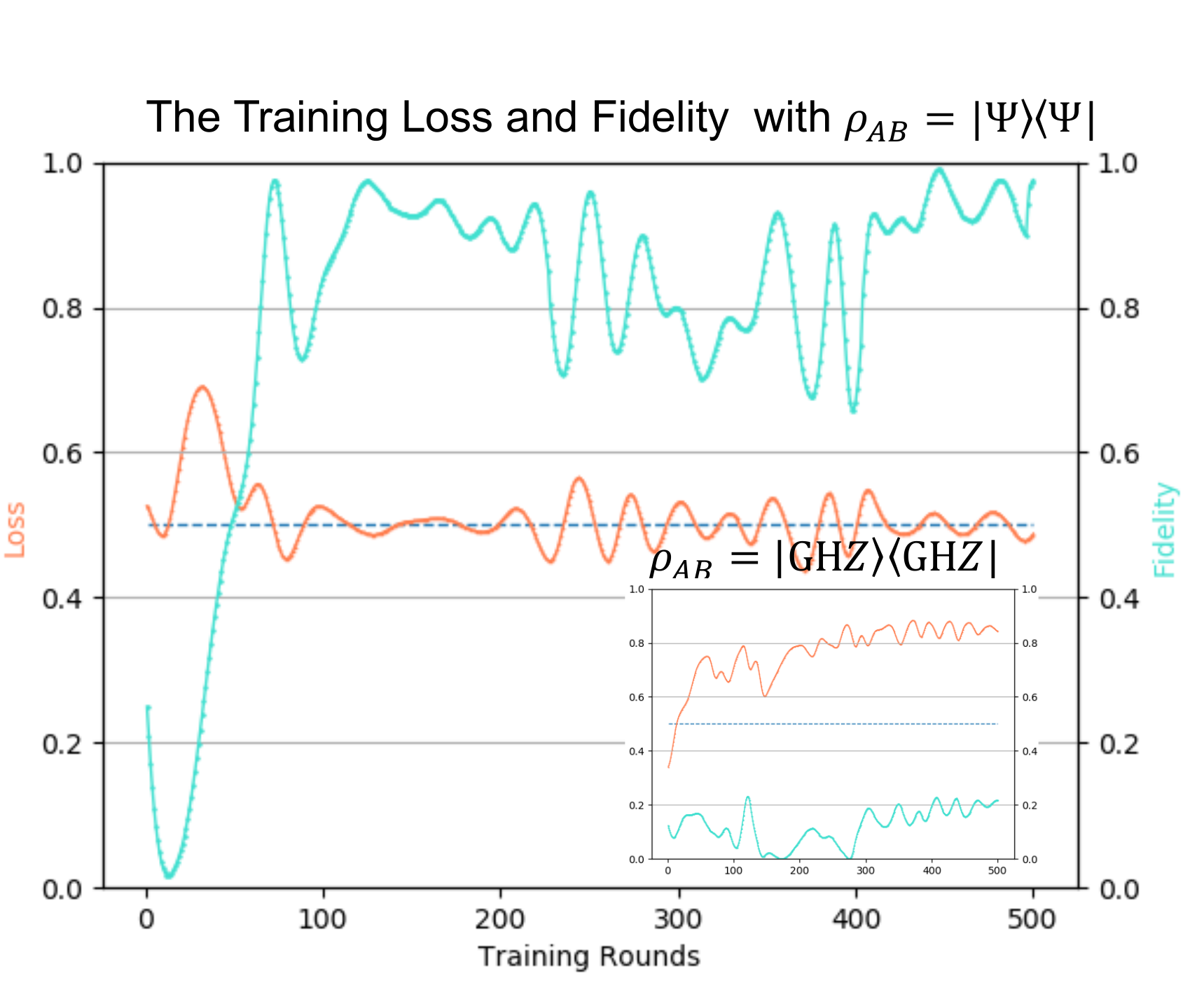}
\caption[small]{The outer plot is the simulation result of  QuGAN when the input is $\ket{\Psi}$. The inner plot is the simulation result of QuGAN when the input is $\ket{\text{GHZ}}$. }
\label{fig:GHZ_loss}
\end{figure}

\section{conclusion}\label{Sec:concl}
In this paper, we have presented the first attempt to approach quantum information processing problems by employing QuGAL. We have proposed  a general framework that enables   quantum information processing problems to be  tackled by using QuGAL. A major  advantage of QuGAL is its capability to  process quantum data directly, where  the required number of quantum  measurements is irrelevant to the size of quantum state. This advantage is significant in accomplishing  quantum information process tasks, since conventional methods generally demand exponential measurements to extract sufficient quantum information into a classical form.  

Encouraged  by the similarity between  QuGAL, online learning, and zero-sum game, we have exploited advanced  online learning methods to conquer two  issues in QuGAL, i.e., finding a quantum generative adversarial  learning algorithm that can rapidly converge to Nash equilibrium, and how the performance in training QuGANs QuGAN can be improved. To resolve the former issue, we proposed QMMW and proved that its  training loss can effectively  converge to Nash equilibrium with the increased number of training rounds. The computational complexity of QMMW is polynomially proportional to the number of qubits and training rounds. To solve the latter issue, we introduced the multiplicative weight training method.  The proposed method has the ability to relieve the dilemma encountered in training QuGANs such that the   optimization may get stuck in local minima. 

Lastly, we have described how to apply QMMW and QuGANs to solve quantum information processing tasks. We have shown that QMMW and QuGANs can be employed to accomplish entanglement test task for pure states. Several numerical simulations were conducted to  validate that QuGANs is capable of accomplishing  the entanglement test with modest quantum resources.   

Our future work has  two key directions. First, we will focus on applying QMMW and QuGANs to tackle more fundamental quantum information problems, e.g.,  the identification of  quantum correlation. Second, we will investigate  whether other advanced online learning methods exist that may improve the training performance of QuGAN. We believe that combining QuGAL with quantum information processing will benefit the fields of  quantum machine learning and quantum information.

\newpage
\bibliographystyle{plain}
\bibliography{myref2}

\begin{thebibliography}{10}

\bibitem{aaronson2018online}
Scott Aaronson, Xinyi Chen, Elad Hazan, Satyen Kale, and Ashwin Nayak.
\newblock Online learning of quantum states.
\newblock In {\em Advances in Neural Information Processing Systems}, pages
  8962--8972, 2018.

\bibitem{arjovsky2017wasserstein}
Martin Arjovsky, Soumith Chintala, and L{\'e}on Bottou.
\newblock Wasserstein generative adversarial networks.
\newblock In {\em International Conference on Machine Learning}, pages
  214--223, 2017.

\bibitem{barnett2009quantum}
Stephen~M Barnett and Sarah Croke.
\newblock Quantum state discrimination.
\newblock {\em Advances in Optics and Photonics}, 1(2):238--278, 2009.

\bibitem{benedetti2018adversarial}
Marcello Benedetti, Edward Grant, Leonard Wossnig, and Simone Severini.
\newblock Adversarial quantum circuit learning for pure state approximation.
\newblock {\em arXiv preprint arXiv:1806.00463}, 2018.

\bibitem{biamonte2017quantum}
Jacob Biamonte, Peter Wittek, Nicola Pancotti, Patrick Rebentrost, Nathan
  Wiebe, and Seth Lloyd.
\newblock Quantum machine learning.
\newblock {\em Nature}, 549(7671):195, 2017.

\bibitem{boyd2004convex}
Stephen Boyd and Lieven Vandenberghe.
\newblock {\em Convex optimization}.
\newblock Cambridge University Press, 2004.

\bibitem{2017arXiv171002581B}
Fernando G.~S.~L. {Brand{\~a}o}, Amir {Kalev}, Tongyang {Li}, Cedric {Yen-Yu
  Lin}, Krysta~M. {Svore}, and Xiaodi {Wu}.
\newblock {Quantum SDP solvers: Large speed-ups, optimality, and applications
  to Quantum Learning}.
\newblock {\em arXiv e-prints}, page arXiv:1710.02581, October 2017.

\bibitem{buhrman2001quantum}
Harry Buhrman, Richard Cleve, John Watrous, and Ronald De~Wolf.
\newblock Quantum fingerprinting.
\newblock {\em Physical Review Letters}, 87(16):167902, 2001.

\bibitem{carrasquilla2017machine}
Juan Carrasquilla and Roger~G Melko.
\newblock Machine learning phases of matter.
\newblock {\em Nature Physics}, 13(5):431, 2017.

\bibitem{chitambar2014local}
Eric Chitambar, Runyao Duan, and Min-Hsiu Hsieh.
\newblock When do local operations and classical communication suffice for
  two-qubit state discrimination?
\newblock {\em IEEE Transactions on Information Theory}, 60(3):1549--1561,
  2014.

\bibitem{chitambar2013revisiting}
Eric Chitambar and Min-Hsiu Hsieh.
\newblock Revisiting the optimal detection of quantum information.
\newblock {\em Physical Review A}, 88(2):020302, 2013.

\bibitem{chitambar2014asymptotic}
Eric Chitambar and Min-Hsiu Hsieh.
\newblock Asymptotic state discrimination and a strict hierarchy in
  distinguishability norms.
\newblock {\em Journal of Mathematical Physics}, 55(11):112204, 2014.

\bibitem{chitambar2017round}
Eric Chitambar and Min-Hsiu Hsieh.
\newblock Round complexity in the local transformations of quantum and
  classical states.
\newblock {\em Nature {C}ommunications}, 8(1):2086, 2017.

\bibitem{dallaire2018quantum}
Pierre-Luc Dallaire-Demers and Nathan Killoran.
\newblock Quantum generative adversarial networks.
\newblock {\em arXiv preprint arXiv:1804.08641}, 2018.

\bibitem{doherty2004complete}
Andrew~C Doherty, Pablo~A Parrilo, and Federico~M Spedalieri.
\newblock Complete family of separability criteria.
\newblock {\em Physical Review A}, 69(2):022308, 2004.

\bibitem{du2018expressive}
Yuxuan Du, Min-Hsiu Hsieh, Tongliang Liu, and Dacheng Tao.
\newblock The expressive power of parameterized quantum circuits.
\newblock {\em arXiv preprint arXiv:1810.11922}, 2018.

\bibitem{farina2017regret}
Gabriele Farina, Christian Kroer, and Tuomas Sandholm.
\newblock Regret minimization in behaviorally-constrained zero-sum games.
\newblock In {\em Proceedings of the 34th International Conference on Machine
  Learning-Volume 70}, pages 1107--1116. JMLR. org, 2017.

\bibitem{goodfellow2016nips}
Ian Goodfellow.
\newblock {NIPS} 2016 tutorial: Generative adversarial networks.
\newblock {\em arXiv preprint arXiv:1701.00160}, 2016.

\bibitem{goodfellow2014generative}
Ian Goodfellow, Jean Pouget-Abadie, Mehdi Mirza, Bing Xu, David Warde-Farley,
  Sherjil Ozair, Aaron Courville, and Yoshua Bengio.
\newblock Generative adversarial nets.
\newblock In {\em Advances in {N}eural {I}nformation {P}rocessing {S}ystems},
  pages 2672--2680, 2014.

\bibitem{grnarova2018an}
Paulina Grnarova, Kfir~Y Levy, Aurelien Lucchi, Thomas Hofmann, and Andreas
  Krause.
\newblock An online learning approach to generative adversarial networks.
\newblock In {\em International Conference on Learning Representations}, 2018.

\bibitem{gutoski2013parallel}
Gus Gutoski and Xiaodi Wu.
\newblock Parallel approximation of min-max problems.
\newblock {\em {C}omputational {C}omplexity}, 22(2):385--428, 2013.

\bibitem{harrow2013testing}
Aram~W Harrow and Ashley Montanaro.
\newblock Testing product states, quantum {M}erlin-{A}rthur games and tensor
  optimization.
\newblock {\em Journal of the ACM (JACM)}, 60(1):3, 2013.

\bibitem{hazan2016introduction}
Elad Hazan et~al.
\newblock Introduction to online convex optimization.
\newblock {\em Foundations and Trends{\textregistered} in Optimization},
  2(3-4):157--325, 2016.

\bibitem{horodecki1997separability}
Pawel Horodecki.
\newblock Separability criterion and inseparable mixed states with positive
  partial transposition.
\newblock {\em arXiv preprint quant-ph/9703004}, 1997.

\bibitem{horodecki2009quantum}
Ryszard Horodecki, Pawe{\l} Horodecki, Micha{\l} Horodecki, and Karol
  Horodecki.
\newblock Quantum entanglement.
\newblock {\em Reviews of {M}odern {P}hysics}, 81(2):865, 2009.

\bibitem{jain2009parallel}
Rahul Jain and John Watrous.
\newblock Parallel approximation of non-interactive zero-sum quantum games.
\newblock In {\em 2009 24th Annual IEEE Conference on Computational
  Complexity}, pages 243--253. IEEE, 2009.

\bibitem{johansson2012qutip}
J~Robert Johansson, Paul~D Nation, and Franco Nori.
\newblock Qutip: An open-source python framework for the dynamics of open
  quantum systems.
\newblock {\em Computer Physics Communications}, 183(8):1760--1772, 2012.

\bibitem{kale2007efficient}
S.~Kale.
\newblock {\em Efficient algorithms using the multiplicative weights update
  method}.
\newblock Princeton University, 2007.

\bibitem{lloyd2018quantum}
Seth Lloyd and Christian Weedbrook.
\newblock Quantum generative adversarial learning.
\newblock {\em arXiv preprint arXiv:1804.09139}, 2018.

\bibitem{lu2018separability}
Sirui Lu, Shilin Huang, Keren Li, Jun Li, Jianxin Chen, Dawei Lu, Zhengfeng Ji,
  Yi~Shen, Duanlu Zhou, and Bei Zeng.
\newblock Separability-entanglement classifier via machine learning.
\newblock {\em Physical Review A}, 98(1), 2018.

\bibitem{ma2018transforming}
Yue-Chi Ma and Man-Hong Yung.
\newblock Transforming {B}ell’s inequalities into state classifiers with
  machine learning.
\newblock {\em npj Quantum Information}, 4(1):34, 2018.

\bibitem{mitarai2018quantum}
Kosuke Mitarai, Makoto Negoro, Masahiro Kitagawa, and Keisuke Fujii.
\newblock Quantum circuit learning.
\newblock {\em arXiv preprint arXiv:1803.00745}, 2018.

\bibitem{nielsen2010quantum}
Michael~A Nielsen and Isaac~L Chuang.
\newblock {\em Quantum computation and quantum information}.
\newblock Cambridge University Press, 2010.

\bibitem{osborne1994course}
Martin~J Osborne and Ariel Rubinstein.
\newblock {\em A course in game theory}.
\newblock {MIT} {P}ress, 1994.

\bibitem{pantazis2018training}
Yannis Pantazis, Dipjyoti Paul, Michail Fasoulakis, and Yannis Stylianou.
\newblock Training generative adversarial networks with weights.
\newblock {\em arXiv preprint arXiv:1811.02598}, 2018.

\bibitem{preskill2018quantum}
John Preskill.
\newblock Quantum computing in the {NISQ} era and beyond.
\newblock {\em arXiv preprint arXiv:1801.00862}, 2018.

\bibitem{romero2019variational}
Jonathan Romero and Alan Aspuru-Guzik.
\newblock Variational quantum generators: Generative adversarial quantum
  machine learning for continuous distributions.
\newblock {\em arXiv preprint arXiv:1901.00848}, 2019.

\bibitem{shor1999polynomial}
Peter~W Shor.
\newblock Polynomial-time algorithms for prime factorization and discrete
  logarithms on a quantum computer.
\newblock {\em {SIAM} {R}eview}, 41(2):303--332, 1999.

\bibitem{situ2018adversarial}
Haozhen Situ, Zhimin He, Lvzhou Li, and Shenggen Zheng.
\newblock Adversarial training of quantum born machine.
\newblock {\em arXiv preprint arXiv:1807.01235}, 2018.

\bibitem{smith2016practical}
Robert~S Smith, Michael~J Curtis, and William~J Zeng.
\newblock A practical quantum instruction set architecture.
\newblock {\em arXiv preprint arXiv:1608.03355}, 2016.

\bibitem{van2018improvements}
Joran van Apeldoorn and Andr{\'a}s Gily{\'e}n.
\newblock Improvements in quantum {SDP}-solving with applications.
\newblock {\em arXiv preprint arXiv:1804.05058}, 2018.

\bibitem{van2019quantum}
Joran van Apeldoorn and Andr{\'a}s Gily{\'e}n.
\newblock Quantum algorithms for zero-sum games.
\newblock {\em arXiv preprint arXiv:1904.03180}, 2019.

\bibitem{van2017learning}
Evert~PL Van~Nieuwenburg, Ye-Hua Liu, and Sebastian~D Huber.
\newblock Learning phase transitions by confusion.
\newblock {\em Nature Physics}, 13(5):435, 2017.

\bibitem{witten2016data}
Ian~H Witten, Eibe Frank, Mark~A Hall, and Christopher~J Pal.
\newblock {\em Data mining: Practical machine learning tools and techniques}.
\newblock Morgan Kaufmann, 2016.

\bibitem{zeng2018learning}
Jinfeng Zeng, Yufeng Wu, Jin-Guo Liu, Lei Wang, and Jiangping Hu.
\newblock Learning and inference on generative adversarial quantum circuits.
\newblock {\em arXiv preprint arXiv:1808.03425}, 2018.

\bibitem{zhang2016understanding}
Chiyuan Zhang, Samy Bengio, Moritz Hardt, Benjamin Recht, and Oriol Vinyals.
\newblock Understanding deep learning requires rethinking generalization.
\newblock {\em arXiv preprint arXiv:1611.03530}, 2016.

\bibitem{zoufal2019quantum}
Christa Zoufal, Aur{\'e}lien Lucchi, and Stefan Woerner.
\newblock Quantum generative adversarial networks for learning and loading
  random distributions.
\newblock {\em arXiv preprint arXiv:1904.00043}, 2019.

\end{thebibliography}

\newpage

\appendix
\section{Supplemental Materials} 

 \subsection{SM(A) Proof of Theorem 1}
  
  The analysis of the convergence of quantum multiplicative matrix weight (QMMW) relies mainly on conclusions drawn from game theory and online learning. To provide an illustrative proof,  we first introduce the necessary concepts from these two fields and then build the connection to Theorem 1.  
   
The formal definition of the zero-sum game is:  \begin{definition}[Zero-sum Game, \cite{farina2017regret}]
  A two-player zero-sum game is a tuple $(\mathcal{X}, \mathcal{Y}, u)$ where $\mathcal{X}$ represents the finite set of actions that player $1$ can play, $\mathcal{Y}$ represents the finite set of actions that player $2$ can play, and $u : \mathcal{X} \times \mathcal{Y} \rightarrow R$ is the payoff function for player $1$, mapping the pair of actions $(x , y)\in  (\mathcal{X},\mathcal{Y})$ of the players into the payoff for player $1$. The corresponding payoff for player 2 is given by $-u(x, y)$.
  \end{definition}
  Exploiting the definition of the zero-sum game, we  introduce two concepts in game theory. The first concept is \textit{approximated best response}, defined as follows:
  \begin{definition}[Approximate best response]\label{def:ABR}
 Given a zero-sum game $(\mathcal{X}, \mathcal{Y}, u)$, we say that $x\in\mathcal{X}$ is an $\varepsilon$-best response to $y$ for player $1$ if $u(x, y) +\varepsilon \geq u(\hat{x}, y)$ for all $ \hat{x}\in\mathcal{X}$. Symmetrically, given $x\in\mathcal{X}$, we say that $y \in \mathcal{Y}$ is an $\varepsilon$-best response to $x$ for player $2$ if $-u(x, y) + \varepsilon \geq -u(x, \hat{y})$ for all $\hat{y} \in Y$. 
  \end{definition}
  Another concept is \textit{approximated Nash equilibrium}, defined as:  \begin{definition}[Approximate Nash equilibrium]\label{def:APN}
 Given a zero-sum game $(\mathcal{X}, \mathcal{Y}, u)$, the strategy pair $(x,y)\in (\mathcal{X}\times \mathcal{Y})$ is an $\varepsilon$-Nash equilibrium for the game if $x$ is an $\varepsilon$-best response to $y$ for player $1$, and $y$ is an $\varepsilon$-best response to $x$ for player 2. Note that Nash equilibrium can be treated as a $0$-Nash equilibrium.
  \end{definition}
A well-known conclusion  between regret and approximate Nash equilibria is as follows  \cite{farina2017regret}:
\begin{prop}\label{prop:2}
	In a zero-sum game, if the average  regrets of the players up to step $T$ are such that $R_T (\hat{x}):=\sum_{t=1}^T(u(x_t,y_t)-u(\hat{x},y_t))$ with $R_T (\hat{x})/T \leq \varepsilon_1$, $R _T (\hat{y}):=\sum_{t=1}^T(-u(x_t,y_t)+u(x_t,\hat{y}_t))$ with $R _T (\hat{y})/T \leq \varepsilon_2$, for all actions $\hat{x}\in\mathcal{X}$, $ \hat{y} \in \mathcal{Y}$, then the strategy pair $(\bar{x}_T,\bar{y}_T):=(\sum_{t=1}^Tx_t/T, \sum_{t=1}^Ty_t/T)\in (\mathcal{X},\mathcal{Y})$ is a $(\varepsilon_1+\varepsilon_2)$-Nash equilibrium.
\end{prop}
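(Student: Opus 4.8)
The plan is to prove Proposition \ref{prop:2} by a direct unrolling of the regret definitions, using the bilinearity of the payoff function $u$ together with a standard averaging/minimax argument. First I would fix arbitrary actions $\hat{x}\in\mathcal{X}$ and $\hat{y}\in\mathcal{Y}$ and write down the two regret hypotheses in averaged form: $\frac{1}{T}\sum_{t=1}^T u(x_t,y_t) - \frac{1}{T}\sum_{t=1}^T u(\hat{x},y_t) \leq \varepsilon_1$ and $-\frac{1}{T}\sum_{t=1}^T u(x_t,y_t) + \frac{1}{T}\sum_{t=1}^T u(x_t,\hat{y}) \leq \varepsilon_2$. The key structural fact I will invoke is that $u$ extends bilinearly to mixed strategies (i.e. to convex combinations of actions), so that $\frac{1}{T}\sum_{t=1}^T u(\hat{x},y_t) = u(\hat{x},\bar{y}_T)$ and $\frac{1}{T}\sum_{t=1}^T u(x_t,\hat{y}) = u(\bar{x}_T,\hat{y})$, where $\bar{x}_T = \frac{1}{T}\sum_t x_t$ and $\bar{y}_T = \frac{1}{T}\sum_t y_t$.

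Next I would introduce the shorthand $\bar{v} := \frac{1}{T}\sum_{t=1}^T u(x_t,y_t)$ for the average realized payoff. From the first inequality, $u(\hat{x},\bar{y}_T) \geq \bar{v} - \varepsilon_1$ for every $\hat{x}$; from the second, $u(\bar{x}_T,\hat{y}) \leq \bar{v} + \varepsilon_2$ for every $\hat{y}$. To conclude that $(\bar{x}_T,\bar{y}_T)$ is an $(\varepsilon_1+\varepsilon_2)$-Nash equilibrium I must check the two best-response conditions of Definition \ref{def:APN}. For player $1$: I need $u(\bar{x}_T,\bar{y}_T) + (\varepsilon_1+\varepsilon_2) \geq u(\hat{x},\bar{y}_T)$ for all $\hat{x}$. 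Since $u(\hat{x},\bar{y}_T)\leq$ well, I actually want an upper bound on $u(\hat{x},\bar{y}_T)$ in terms of $u(\bar{x}_T,\bar{y}_T)$; here I use that $u(\bar{x}_T,\bar{y}_T)\geq \bar{v}-\varepsilon_1$ (taking $\hat{x}=\bar{x}_T$ componentwise, or rather applying the first inequality with the mixed action $\bar{x}_T$) together with $u(\hat{x},\bar{y}_T)\leq$ the minimax value. The cleanest route: combine the two displayed inequalities to get, for all $\hat{x},\hat{y}$,
\begin{equation}
u(\hat{x},\bar{y}_T) - \varepsilon_1 \leq \bar{v} \leq u(\bar{x}_T,\hat{y}) + \varepsilon_2 \ \text{(wait—need care)}.
\end{equation}
More precisely, $u(\bar{x}_T,\bar{y}_T)$ lies between $\bar v-\varepsilon_1$ and $\bar v+\varepsilon_2$ (apply each averaged inequality with the mixed actions $\bar x_T$, $\bar y_T$ themselves), so $|u(\bar x_T,\bar y_T)-\bar v|\le \max(\varepsilon_1,\varepsilon_2)$, and then $u(\hat x,\bar y_T)\le \bar v+\varepsilon_1\le u(\bar x_T,\bar y_T)+\varepsilon_1+\varepsilon_2$ — wait, I should instead note $u(\hat x,\bar y_T)\le$? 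The first hypothesis gives a \emph{lower} bound $u(\hat x,\bar y_T)\ge \bar v-\varepsilon_1$, not an upper bound, so I must be careful about the direction of the game (max-min vs min-max) and which player maximizes; I will align the inequality directions with the sign conventions of Definition \ref{def:ABR} before assembling the final chain.

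The main obstacle, then, is not any deep mathematics but rather bookkeeping: ensuring the bilinear extension of $u$ to averaged (mixed) strategies is legitimate and that the inequality directions in the two regret hypotheses are combined consistently with the asymmetric roles of the two players in Definition \ref{def:APN}. I expect the final step to be a short two-line chain of inequalities once $\bar v$ is eliminated, of the form $u(\hat x,\bar y_T)\le \bar v+\varepsilon_1$ and $\bar v \le u(\bar x_T,\bar y_T)+\varepsilon_2$ (with the correct signs), whence $u(\hat x,\bar y_T)\le u(\bar x_T,\bar y_T)+\varepsilon_1+\varepsilon_2$, and symmetrically for player $2$; this is exactly the $(\varepsilon_1+\varepsilon_2)$-best-response condition, completing the proof.
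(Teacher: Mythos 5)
The paper contains no proof of this proposition: it is imported as a ``well-known conclusion'' from the cited reference on regret and equilibria, so there is no in-paper argument to compare against, and your proposal must stand on its own. It does. Your plan is the standard folklore proof, and the two ingredients you identify are exactly the right ones: (i) bilinearity of $u$ over mixed strategies, which lets you collapse $\frac{1}{T}\sum_t u(\hat x,y_t)$ into $u(\hat x,\bar y_T)$ and\textemdash crucially\textemdash lets you instantiate each regret bound at the \emph{mixed} strategy $\hat x=\bar x_T$ (resp.\ $\hat y=\bar y_T$) even though the hypothesis is stated only for actions in the finite sets; this extension is legitimate because $R_T(\cdot)$ is affine in its argument, so a bound holding at every vertex holds on the whole simplex (note also that $\bar x_T$ lives in that simplex, not in $\mathcal{X}$ itself, a slight abuse in the statement that your ``mixed strategy'' framing absorbs); and (ii) elimination of the average realized payoff $\bar v$ by chaining one player's bound evaluated at the other player's averaged strategy. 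Your final chain $u(\hat x,\bar y_T)\le \bar v+\varepsilon_1$, $\bar v\le u(\bar x_T,\bar y_T)+\varepsilon_2$, hence $u(\hat x,\bar y_T)\le u(\bar x_T,\bar y_T)+\varepsilon_1+\varepsilon_2$, together with the symmetric chain for player $2$, is precisely the $(\varepsilon_1+\varepsilon_2)$-best-response condition of Definition \ref{def:APN}.

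The confusion you run into mid-argument is not your error but the statement's: taken literally, $R_T(\hat x)=\sum_t\bigl(u(x_t,y_t)-u(\hat x,y_t)\bigr)\le\varepsilon_1 T$ yields the \emph{lower} bound $u(\hat x,\bar y_T)\ge\bar v-\varepsilon_1$, which is the regret of a player who minimizes $u$, whereas Definition \ref{def:ABR} casts player $1$ as the maximizer; the two regret expressions in the Proposition are sign-flipped relative to Definitions \ref{def:ABR}--\ref{def:APN} (a sign ambiguity that recurs in the paper's own definitions of $R_T(\sigma_G)$ and $R_T(\sigma_D)$). Once you commit to one convention\textemdash player $1$ maximizes, so her regret is $\sum_t\bigl(u(\hat x,y_t)-u(x_t,y_t)\bigr)$, and dually for player $2$\textemdash your two-line chain closes the argument. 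So: correct approach, correctly identified subtlety; the only improvement would be to fix the sign convention at the outset rather than deferring it to the end.
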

We now connect the Proposition \ref{prop:2} with Theorem 1. In QMMW, the minimized regret for the generator is   
\begin{equation}
	R_T(\sigma_G) = {-\sum_{t=1}^T\mathcal{L}(\sigma_G^{(t)},\sigma_D^{(t)})+\min_{\sigma_G} \sum_{t=1}^T\mathcal{L}(\sigma_G,\sigma_D^{(t)})}~.
\end{equation}
The minimized regret for the discriminator is
\begin{equation}
R_T(\sigma_D)= {\sum_{t=1}^T\mathcal{L}(\sigma_G^{(t)},\sigma_D^{(t)})-\min_{\sigma_D} \sum_{t=1}^T\mathcal{L}(\sigma_G^{(t)},\sigma_D)}~.	
\end{equation}
Suppose that  the optimal strategy pair is $(\sigma_G^*, \sigma_D^{*})$ and the corresponding Nash equilibrium is $\mathcal{L}(\sigma_G^*, \sigma_D^{*})=1/2$. Following the statement of Proposition \ref{prop:2} and Definition \ref{def:ABR}, with setting  $\varepsilon_G$ and $\varepsilon_G$ that satisfies   $R_T(\sigma_G)/T\leq \varepsilon_G$ and $R_T(\sigma_D)/T\leq \varepsilon_D$,  the  strategy pair $(\bar{\sigma}_G, \bar{\sigma}_D)$ with $\bar{\sigma}=\sum_{t=1}^T\sigma^{(t)}/{T}$  and the   $\bar{\sigma}_D=\sum_{t=1}^T{\sigma_D}^{(t)}/T$, is a $(\varepsilon_G+\varepsilon_D)$-Nash equilibrium, i.e.,
\begin{equation}\label{eqn:thm3_1}
	\mathcal{L}({\sigma^*}, \bar{\sigma}_D)-(\varepsilon_G+\varepsilon_D)\leq \mathcal{L}(\bar{\sigma}_G, \bar{\sigma}_D)
\leq \mathcal{L}(\bar{\sigma}_G, {\sigma_D^*}) +(\varepsilon_G+\varepsilon_D)~.
\end{equation}
These two inequalities come from Definition \ref{def:ABR}. 
According to the definition of Nash equilibrium, which is an $0$-Nash Equilibrium with an optimal  strategy pair $(\sigma^*, \sigma_D^{*})$, we rewrite  Eqn.~(\ref{eqn:thm3_1}) as  
\begin{equation}\label{eqn:thm3_2}
	\mathcal{L}(\bar{\sigma}_G, \sigma_D^{*})
\leq \mathcal{L}(\sigma_G^*, \sigma_D^{*})
\leq \mathcal{L}(\sigma_G^*, \bar{\sigma_D}) ~.
\end{equation}
Connecting  Eqn.~(\ref{eqn:thm3_1}) with  Eqn.~(\ref{eqn:thm3_2}), we have 
\begin{equation}\label{eqn:10-1}
		| \mathcal{L}(\bar{\sigma}_G, \bar{\sigma}_D)
-  \mathcal{L}(\sigma^*_G, \sigma_D^{*}))
| \leq  \varepsilon_G+\varepsilon_D~.
\end{equation} 
The goal of QMMW, or Theorem 1, is to prove  $\varepsilon_G+\varepsilon_D\leq 3\sqrt{N/T}$, which implies that $\mathcal{L}(\bar{\sigma}_G, \bar{\sigma}_D)$ converges to the optimal value $\mathcal{L}(\sigma_G^*, \sigma_D^*)=1/2$ with increasing $T$. 

 \begin{proof}[Proof of Theorem 1]
 
As discussed above, we aim to prove  that QMMW possesses the no-regret property when the original state $\rho$ is separable. 

We employ the following two claims to quantify two regrets $R_T(\sigma_G)$ and $R_T(\sigma_D)$ (Proof of Claim \ref{clm6} and  Claim \ref{clm1} provided later in this section.):
\begin{clam}\label{clm6}
The regret of $R_T(\sigma_G)$  based on the update rule of the Quantum Matrix Multiplicative Weights algorithm (to maximize the loss) is bounded by $\varepsilon_G$, i.e.,
\begin{equation}
\frac{R_T(\sigma_G)}{T}\leq \varepsilon_G=\frac{\epsilon^2 T + N}{2\epsilon T}~.
\end{equation} 
\end{clam}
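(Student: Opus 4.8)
\emph{Proof proposal.} The plan is to recognise that the generator's update \eqref{eqn:upd_G} is exactly the matrix multiplicative weights (MMW) rule applied to a sequence of \emph{linear} losses, and then to invoke the standard MMW regret guarantee, keeping track of constants so that the second-order term collapses to the stated form. First I would strip the round-$t$ loss to its $\sigma_G$-dependent part. Since $\mathcal{L}(\sigma_G,\sigma_D^{(t)}) = \tfrac12\Tr(\sigma_D^{(t)}\rho) + \tfrac12 - \tfrac12\Tr(\sigma_D^{(t)}\sigma_G)$ is affine in $\sigma_G$ and the $\sigma_G$-independent terms cancel in the regret, and since the generator is the maximising player, a direct computation gives
\begin{equation}\label{eqn:clm6-reduce}
R_T(\sigma_G) = \frac{1}{2}\Bigl( \sum_{t=1}^{T}\Tr(\sigma_D^{(t)}\sigma_G^{(t)}) - \min_{\sigma_G}\sum_{t=1}^{T}\Tr(\sigma_D^{(t)}\sigma_G) \Bigr),
\end{equation}
the minimum ranging over all density matrices on the $2^N$-dimensional space. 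Comparing with \eqref{eqn:upd_G}, the iterate $\sigma_G^{(t)}\propto \exp\!\bigl(-\epsilon\sum_{\tau\le t}\sigma_D^{(\tau)}\bigr)$ is precisely the MMW (matrix exponentiated gradient) iterate for the loss matrices $\{\sigma_D^{(\tau)}\}$ at step size $\epsilon$, and each $\sigma_D^{(\tau)}$ being a density matrix gives $0\preceq\sigma_D^{(\tau)}\preceq\mathbb{I}$ and $(\sigma_D^{(\tau)})^2\preceq\sigma_D^{(\tau)}$.

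Next I would run the usual potential-function proof of the MMW bound. Set $\Phi_t := \Tr\exp\!\bigl(-\epsilon\sum_{\tau\le t}\sigma_D^{(\tau)}\bigr)$, so $\Phi_0 = 2^N$. The Golden--Thompson inequality $\Tr e^{A+B}\le\Tr(e^A e^B)$ together with the operator estimate $e^{-\epsilon\sigma_D^{(t)}}\preceq \mathbb{I}-\epsilon\sigma_D^{(t)}+\tfrac{\epsilon^2}{2}(\sigma_D^{(t)})^2\preceq \mathbb{I}-\epsilon\bigl(1-\tfrac{\epsilon}{2}\bigr)\sigma_D^{(t)}$ gives the one-step decrease $\Phi_t \le \Phi_{t-1}\bigl(1-\epsilon(1-\tfrac{\epsilon}{2})\Tr(\sigma_D^{(t)}\sigma_G^{(t-1)})\bigr)$. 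Telescoping with $1+x\le e^x$, lower bounding $\Phi_T\ge \exp\!\bigl(-\epsilon\,\lambda_{\min}(\sum_t\sigma_D^{(t)})\bigr) = \exp\!\bigl(-\epsilon\min_{\sigma_G}\sum_t\Tr(\sigma_D^{(t)}\sigma_G)\bigr)$, taking logarithms, dividing by $\epsilon(1-\tfrac{\epsilon}{2})$, and finally using $\epsilon\le 1/2$, $\ln 2<1$, and the crude bounds $\lambda_{\min}(\sum_t\sigma_D^{(t)})\le T$, $\Tr(\sigma_D^{(t)}\sigma_G^{(t)})\le 1$ to fold the curvature contribution into a single $\epsilon T$ term, I expect to obtain
\begin{equation}\label{eqn:clm6-mmw}
\sum_{t=1}^{T}\Tr(\sigma_D^{(t)}\sigma_G^{(t)}) - \min_{\sigma_G}\sum_{t=1}^{T}\Tr(\sigma_D^{(t)}\sigma_G) \;\le\; \epsilon T + \frac{N}{\epsilon}.
\end{equation}
The update \eqref{eqn:upd_G} lets $\sigma_G^{(t)}$ depend on $\sigma_D^{(t)}$, a slight ``optimistic'' shift relative to the textbook iterate $\propto\exp(-\epsilon\sum_{\tau<t}\sigma_D^{(\tau)})$; this only helps the online learner, so \eqref{eqn:clm6-mmw} still holds, and it may alternatively be quoted directly from \cite{kale2007efficient}.

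Finally, substituting \eqref{eqn:clm6-mmw} into \eqref{eqn:clm6-reduce} and dividing by $T$ yields $R_T(\sigma_G)/T \le \tfrac12\bigl(\epsilon + \tfrac{N}{\epsilon T}\bigr) = \tfrac{\epsilon^2 T + N}{2\epsilon T} = \varepsilon_G$, which is the claim. I expect the middle step to be the main obstacle: because $\sigma_D^{(t)}$ does not commute with the accumulated exponent, the potential recursion genuinely needs Golden--Thompson rather than a scalar argument, and the constants in the operator-exponential estimate and in the division by $\epsilon(1-\tfrac{\epsilon}{2})$ must be handled with enough slack that the second-order error really fits inside one copy of $\epsilon T$, producing exactly the stated $\varepsilon_G$ rather than a constant multiple of it; by contrast, the affine reduction and the final substitution are routine.
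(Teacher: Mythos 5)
Your proposal is correct and follows essentially the same route as the paper's proof: the MMW potential $\Tr\exp(-\epsilon\sum_{\tau}\sigma_D^{(\tau)})$, Golden--Thompson, an operator bound of the form $e^{-\epsilon\sigma_D}\preceq\mathbb{I}-\epsilon(1-O(\epsilon))\sigma_D$ (the paper uses $\epsilon_1=1-e^{-\epsilon}\geq\epsilon(1-\epsilon)$ where you use the second-order Taylor estimate), the minimum-eigenvalue lower bound on the final potential, and the same algebra yielding $\epsilon T+N/\epsilon$. The only substantive difference is that you explicitly flag and patch the off-by-one between the update rule (exponent summed to $t$) and the potential recursion (which naturally produces the iterate with exponent summed to $t-1$), a mismatch the paper's proof passes over silently.
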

\begin{clam}\label{clm1}
The regret of $R_T(\sigma_D)$  based on the update rule of the Quantum Matrix Multiplicative Weights algorithm (to minimize the loss) is bounded by $\varepsilon_D$, i.e.,
\begin{equation}
	\frac{R_T(\sigma_D)}{T}\leq \varepsilon_D=\frac{\epsilon^2 T + (N+1)}{2\epsilon T}~.
\end{equation}
\end{clam}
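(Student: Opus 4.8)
The plan is to recognise the discriminator's update as the matrix multiplicative weights (MMW) step and then run the standard potential--function argument. The first move is to linearise the per--round loss: since $\mathcal{L}(\sigma_G^{(t)},\sigma_D)=\tfrac12\Tr(\sigma_D(\rho-\sigma_G^{(t)}))+\tfrac12$, the additive constant $\tfrac12$ does not depend on $\sigma_D$ and cancels in the regret, so with the loss matrices $M_t:=\rho-\sigma_G^{(t)}$ one gets
\[
R_T(\sigma_D)=\tfrac12\Big(\sum_{t=1}^T\Tr(\sigma_D^{(t)}M_t)-\min_{\sigma_D}\sum_{t=1}^T\Tr(\sigma_D M_t)\Big)=\tfrac12\Big(\sum_{t=1}^T\Tr(\sigma_D^{(t)}M_t)-\lambda_{\min}\big(\sum_{t}M_t\big)\Big).
\]
The prescribed update $\sigma_D^{(t+1)}\propto\exp(-\epsilon\sum_{\tau\le t}M_\tau)$, started from the maximally mixed state, is exactly the Gibbs/MMW rule for these losses at step size $\epsilon$; and because $\rho$ and $\sigma_G^{(t)}$ are density operators we have $-\mathbb{I}\preceq M_t\preceq\mathbb{I}$, which together with $\epsilon\le 1/2$ supplies the spectral bounds used below.

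Next I would introduce the potential $\Phi_t:=\Tr\exp(-\epsilon\sum_{\tau=1}^{t}M_\tau)$, so that $\Phi_0=d$ with $d$ the dimension of the discriminator's register, and squeeze it from both sides. For the lower bound, $\Phi_T\ge e^{-\epsilon\lambda_{\min}(\sum_t M_t)}$, since the trace of a matrix exponential dominates any single exponentiated eigenvalue. For the upper bound I would use Golden--Thompson, $\Phi_t\le\Tr\!\big(e^{-\epsilon\sum_{\tau<t}M_\tau}e^{-\epsilon M_t}\big)$, together with the operator inequality $e^{-\epsilon M_t}\preceq\mathbb{I}-\epsilon M_t+\epsilon^2M_t^2\preceq\mathbb{I}-\epsilon M_t+\epsilon^2\mathbb{I}$ (valid because $\epsilon M_t$ has spectrum in $[-\tfrac12,\tfrac12]$ and $M_t^2\preceq\mathbb{I}$). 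Using $e^{-\epsilon\sum_{\tau<t}M_\tau}=\Phi_{t-1}\sigma_D^{(t)}$ from the update rule, this gives $\Phi_t\le\Phi_{t-1}\big(1+\epsilon^2-\epsilon\Tr(\sigma_D^{(t)}M_t)\big)\le\Phi_{t-1}\exp\!\big(\epsilon^2-\epsilon\Tr(\sigma_D^{(t)}M_t)\big)$ after $1+x\le e^x$, and telescoping over $t=1,\dots,T$ yields $\Phi_T\le d\,\exp\!\big(\epsilon^2T-\epsilon\sum_t\Tr(\sigma_D^{(t)}M_t)\big)$.

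Chaining the two bounds on $\Phi_T$, taking logarithms, and rearranging gives $\sum_t\Tr(\sigma_D^{(t)}M_t)-\lambda_{\min}(\sum_t M_t)\le(\ln d+\epsilon^2T)/\epsilon$, hence, dividing by $2$ (the prefactor above) and by $T$, $R_T(\sigma_D)/T\le(\ln d+\epsilon^2T)/(2\epsilon T)$. It then remains to identify $\ln d$: in the purified implementation of QMMW the discriminator acts with one ancillary qubit beyond the $N$-qubit target register (the same extra qubit that appears in the POVM form $M_D=U_D^\dagger(\mathbb{I}\otimes\ket{0}\bra{0})U_D$ used for QuGANs), so $d=2^{N+1}$ and $\ln d=(N+1)\ln 2<N+1$, which produces exactly $\varepsilon_D=(\epsilon^2T+(N+1))/(2\epsilon T)$. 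Claim~\ref{clm6} is the same computation carried out with loss matrices $\sigma_D^{(\tau)}$ on the bare $2^N$-dimensional register, which is why the generator's bound has $N$ rather than $N+1$.

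The step I expect to be the main obstacle is the operator inequality $e^{-\epsilon M_t}\preceq\mathbb{I}-\epsilon M_t+\epsilon^2\mathbb{I}$ for the \emph{indefinite} matrix $M_t=\rho-\sigma_G^{(t)}$: one must lift the scalar estimate $e^{-x}\le 1-x+x^2$ (which holds on $[-\tfrac12,\tfrac12]$ by a short convexity check) through the spectral theorem and invoke $M_t^2\preceq\mathbb{I}$, and this is the only place where the hypothesis $\epsilon\le 1/2$ is genuinely used. The Golden--Thompson step is the other ingredient that needs care, since it is what makes the noncommuting cumulative sum $\sum_{\tau\le t}M_\tau$ telescope cleanly; beyond these two points the argument is routine bookkeeping of the dimension constant. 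Finally, it is worth recording that, in contrast to the generator's regret in the entanglement-test application, this bound holds unconditionally, because the discriminator is never subjected to a `constraint' step.
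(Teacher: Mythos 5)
Your argument is correct and arrives at the stated bound, but it diverges from the paper's proof at the key step. The paper also runs a potential/Golden--Thompson argument, but it handles the indefinite loss matrix $Q^{(t)}=\rho-\sigma_G^{(t)}$ by splitting it into its positive and negative parts $Q^{(t)}_+\perp Q^{(t)}_-$ and applying the two \emph{linear} operator inequalities of Corollary~4 (Kale's Corollary~2), $e^{\epsilon A}\preceq\mathbb{I}-\epsilon_1 A$ on $[0,1]$ and $e^{\epsilon A}\preceq\mathbb{I}-\epsilon_2 A$ on $[-1,0]$, followed by a sign argument ($\epsilon_2\geq\epsilon_1$) to recombine the two pieces and the estimate $\epsilon_1\geq\epsilon(1-\epsilon)$ to extract the $\epsilon^2T$ term. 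You instead lift the quadratic scalar bound $e^{-x}\leq 1-x+x^2$ through the spectral theorem to get $e^{-\epsilon M_t}\preceq\mathbb{I}-\epsilon M_t+\epsilon^2\mathbb{I}$ directly for the indefinite $M_t$, which avoids the positive/negative decomposition entirely and makes the $\epsilon^2 T$ term appear immediately in the telescoping. Both routes yield the same asymptotics; yours is the more standard Arora--Kale-style analysis and is cleaner, while the paper's is tailored to reuse the exact statement of its cited corollary. Your reduction of the regret to $\tfrac12\bigl(\sum_t\Tr(\sigma_D^{(t)}M_t)-\lambda_{\min}(\sum_tM_t)\bigr)$ and the two-sided squeeze of $\Phi_T$ are all sound.

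One point to correct: your identification of the additive constant. In QMMW the discriminator $\sigma_D$ acts on the bare $N$-qubit register --- the ancilla in $M_D=U_D^{\dagger}(\mathbb{I}\otimes\ket{0}\bra{0})U_D$ belongs to the QuGAN implementation of Section~4, not to this algorithm --- so your telescoping actually starts from $\Phi_0=2^N$ and gives $\ln d=N\ln 2<N$, which is \emph{stronger} than the claimed $N+1$; the claim then follows a fortiori. The paper's $N+1$ has a different origin: its potential is indexed so that $\Theta^{(1)}=\Tr\bigl(e^{-\epsilon(\rho-\sigma_G^{(1)})}\bigr)\leq e\cdot 2^N\leq e^{N+1}$, i.e.\ the extra $+1$ is the factor of $e$ from the first update, not an extra qubit. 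This does not affect the validity of your proof, but the stated justification for $d=2^{N+1}$ should be dropped.
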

Combining the above two claims and with setting $\epsilon=2\sqrt{N/T}$, we have
\begin{equation}
	\varepsilon_G+\varepsilon_D\leq   \frac{2\epsilon^2 T+2(N+1)}{2\epsilon T}\leq 3\sqrt{\frac{N}{T}}~,
\end{equation}
where the first inequality comes from the results of Claim \ref{clm6} and Claim \ref{clm1}, and the second equality results from  $N+1\leq 2N$  and the definition of $\epsilon$. The discrepancy between the loss $\mathcal{L}(\bar{\sigma}_G, \bar{\sigma}_D)$ and the equilibrium value $\mathcal{L}({\sigma}_G^*, {\sigma}_D^*)$ yields 
\begin{equation}\label{eqn:10-2}
	\left|\mathcal{L}(\bar{\sigma}_G, \bar{\sigma}_{D})-\mathcal{L}(\sigma_G^*, \sigma_D^*)\right|\leq 3\sqrt{\frac{N}{T}}.
\end{equation}
\end{proof}
Before giving the proof of Claim \ref{clm1} and Claim \ref{clm6}, we introduce the following two results to facilitate the proof. 
\begin{coro}[Corollary 2, \cite{kale2007efficient}]\label{coro:clm7}
 For any $\epsilon\leq 1$, let $\epsilon_1 =1 -e^{-\epsilon}$ and $\epsilon_2=e^{\epsilon} - 1$. We then  have the following matrix inequalities:
 \begin{itemize}
 	\item If all eigenvalues of a symmetric matrix $A$ lie in $[0,1]$, then $e^{\epsilon A}\preceq \mathbb{I}-\epsilon_1A$;
 	\item If all eigenvalues of a symmetric matrix $A$ lie in $[-1,0]$, then $e^{\epsilon A}\preceq \mathbb{I}-\epsilon_2A$; 
 \end{itemize}	
\end{coro}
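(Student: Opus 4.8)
The plan is to reduce the two operator inequalities to a pair of elementary scalar inequalities and then prove the latter by convexity of the exponential. The starting observation is that, for a fixed symmetric matrix $A$, both sides of each asserted inequality are functions of $A$ alone: the left-hand side is $f(A)$ with $f(t)=e^{-\epsilon t}$, and the right-hand side $\mathbb{I}-\epsilon_i A$ is $g(A)$ with $g(t)=1-\epsilon_i t$. Since $A$ is symmetric it admits a spectral decomposition $A=\sum_j \lambda_j P_j$ with orthogonal projectors $P_j$, and every function of $A$ is simultaneously diagonalized in the same eigenbasis. Consequently the Loewner ordering $f(A)\preceq g(A)$ holds if and only if $f(\lambda_j)\le g(\lambda_j)$ for each eigenvalue $\lambda_j$ of $A$. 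This spectral reduction is the structural heart of the argument: it is precisely what collapses a statement about matrices into a statement about real numbers confined to the stated interval.

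It then remains to establish the scalar bounds $e^{-\epsilon t}\le 1-\epsilon_1 t$ for $t\in[0,1]$ and $e^{-\epsilon t}\le 1-\epsilon_2 t$ for $t\in[-1,0]$, with $\epsilon_1=1-e^{-\epsilon}$ and $\epsilon_2=e^{\epsilon}-1$. I would prove each by the chord bound for convex functions: $t\mapsto e^{-\epsilon t}$ is convex, so on any interval its graph lies below the secant line joining the two endpoints. On $[0,1]$ the secant through $(0,1)$ and $(1,e^{-\epsilon})$ is exactly $1-(1-e^{-\epsilon})t=1-\epsilon_1 t$; on $[-1,0]$ the secant through $(-1,e^{\epsilon})$ and $(0,1)$ is exactly $1-(e^{\epsilon}-1)t=1-\epsilon_2 t$. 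Checking the two endpoints in each case shows the linear bound is tight there (equality holds), and convexity upgrades this to the inequality on the whole interval; the constants $\epsilon_1,\epsilon_2$ are in fact pinned down by forcing this endpoint agreement. Combining the scalar bounds with the spectral reduction and the hypothesis that the spectrum of $A$ lies in the relevant interval completes the proof.

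The step I expect to require the most care is the sign and orientation bookkeeping rather than any deep inequality. One must align the exponent appearing in the corollary with the $e^{-\epsilon(\cdot)}$ form used in the QMMW update rules of Eqn.~(\ref{eqn:upd_G}) and the discriminator update, so that the eigenvalues fed into the scalar bound genuinely lie in $[0,1]$ (for operators bounded between $0$ and $\mathbb{I}$, such as $\rho$, $\sigma_G^{(t)}$, $\sigma_D^{(t)}$) or in $[-1,0]$ after the appropriate shift. Once the exponent convention is fixed consistently, the spectral reduction and the convex chord estimate are routine, and the only remaining task is to verify that each matrix to which the corollary is applied has its spectrum inside the stated range, which follows from these matrices being single density operators or differences thereof.
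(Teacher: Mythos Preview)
The paper does not supply its own proof of this corollary: it is quoted verbatim as Corollary~2 of \cite{kale2007efficient} and invoked as a black box in the derivations of Eqns.~(\ref{eqn:clm6}) and~(\ref{eqn:clm5}). There is therefore nothing in the paper to compare against, but your argument is correct and is essentially the standard one---spectral reduction followed by the secant bound for the convex function $t\mapsto e^{-\epsilon t}$.

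You were also right to flag the sign convention. As printed, the corollary reads $e^{\epsilon A}\preceq\mathbb{I}-\epsilon_1 A$, which is false for $A$ with spectrum in $[0,1]$ (the left side has eigenvalues $\ge 1$, the right side $\le 1$). The intended statement, and the one actually used in the paper's proofs of Claims~\ref{clm6} and~\ref{clm1}, is $e^{-\epsilon A}\preceq\mathbb{I}-\epsilon_1 A$ (and likewise with $\epsilon_2$ for the negative-spectrum case). Your proof handles this correctly by working with $e^{-\epsilon t}$ throughout.
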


\begin{proof}[Proof of Claim \ref{clm6}]
Following the observation of Eqn.~(\ref{eqn:10-1}) and Proposition \ref{prop:2}, we  hope the regret $R_T(\sigma_G)$ possesses the no-regret property with $R_T(\sigma_G)\leq \varepsilon_G \sim O(1/\sqrt{T})$, i.e.,
\begin{equation}
 \sum_{t=1}^T\mathcal{L}(\sigma_G^{(t)}, \sigma_D^{(t)})  \geq \sum_{t=1}^T\mathcal{L}(\sigma_G^*, \sigma_D^{(t)}) - \varepsilon_GT~.
\end{equation}
To quantify $R_T(\sigma_{G})$, we define the potential function $\Theta^{(t)}$ and track its evolution with varying $t$, which is analogous to the proof of the conventional MMW algorithm, i.e., $\Theta^{(t)} =\Tr(W^{(t)}) $  and $W^{(t)}=e^{\sum_{\tau=1}^{t-1} -\epsilon \sigma_D^{(\tau)}}$.  By defining  $\epsilon_1=1-e^{-\epsilon}$, we have:
\begin{eqnarray}\label{eqn:clm6}
&&\Theta^{(t+1)} =\Tr{(W^{(t+1)})}=\Tr{(e^{\sum_{\tau=1}^t -\epsilon \sigma_D^{(\tau)}})} \nonumber\\
\leq &&\Tr{(e^{(\sum_{\tau=1}^{t-1} -\epsilon \sigma_D^{(\tau)})} e^{-\epsilon \sigma_D^{(t)}})} 
	=\Tr(W^{(t)}e^{-\epsilon \sigma_D^{(t)}})	\leq 
\Tr(W^{(t)}(\mathbb{I}-\epsilon_1\sigma_D^{(t)})) \nonumber\\
	=&&
  \Tr{(W^{(t)})}(1 -\epsilon_1\Tr(\sigma_D^{(t)}\frac{W^{(t)}}{\Tr{(W^{(t)})}}))       
\leq 
  \Theta^{(t)}e^{-\epsilon_1\Tr{(\sigma_D^{(t)}\frac{W^{(t)}}{\Tr{(W^{(t)})}})}}\nonumber\\
	= && 
 \Theta^{(t)}e^{-\epsilon_1\Tr{\left(\sigma_D^{(t)}\sigma_G^{(t)}\right)}}~.
\end{eqnarray}
 The first inequality comes from the  Golden-Thompson inequality, the second  inequality employs the conclusion of Corollary \ref{coro:clm7}, the third is supported by $1-x\leq e^{-x}$,  and the last equality arises in $\frac{W^{(t)}}{\Tr{(W^{(t)})}} = \sigma_G^{(t)}$. 

By induction, since  $\Theta^{(1)}= \Tr(e^{-\epsilon \sigma_D^{(1)}})\leq \Tr(e^{0}) \leq 2^{N}$, we have 
\begin{eqnarray}\label{eqn:211}
	\Theta^{(T+1)}\leq 2^{N}e^{-\epsilon_1\sum_{t=1}^T\Tr{\left(\sigma_D^{(t)}\sigma_G^{(t)}\right)}}.
\end{eqnarray}

The denotation  $A=\sum_{t=1}^{T}{(\sigma_D^{(t)})}$,    $\Theta^{(T+1)}$ yields $$\Theta^{(T+1)}=\Tr(\exp(-\epsilon A))=\sum_k \exp{(-\epsilon\lambda_k(A))}\geq \exp{(-\epsilon\lambda_n(A))}~,$$ where  $\lambda_n(A)$ refers to the minimum eigenvalue  of $A$. We also have $$e^{-\epsilon\sum_{t=1}^{T}\Tr{(\sigma_D^{(t)}\sigma_G^* )}}\leq \exp{(-\epsilon\lambda_n(A))}~,$$ since we always have $\lambda_n(A)\leq \Tr(A\sigma_G^*)$ with $\Tr(\sigma_G^*)=1$ and then $\exp({-\epsilon\lambda_n(A)}) \geq \exp({-\epsilon\Tr(A\sigma_G^*)})$. The lower bound of $\Theta^{(T+1)}$ is therefore 
\begin{equation}\label{eqn:21}
	\Theta^{(T+1)}\geq e^{-\epsilon\sum_{t=1}^T\Tr{(\sigma_D^{(t)}\sigma_G^*)}}~.
\end{equation}


By connecting Eqn.~(\ref{eqn:21}) with Eqn.~(\ref{eqn:211}), we have 
\begin{eqnarray}\label{eqn:22}
 &&e^{-\epsilon\sum_{t=1}^T\Tr{(\sigma_D^{(t)}\sigma_G^*)}}\leq  2^{N}  e^{-\epsilon_1\sum_{t=1}^T\Tr{(\sigma_D^{(t)}\sigma_G^{(t)})}}~.
\end{eqnarray}

Taking the logarithms of Eqn.~(\ref{eqn:22}) and exploiting $\epsilon_1\geq \epsilon(1-\epsilon)$, we obtain the  following inequality,
\begin{align*}\label{eqn:221}
&(1-\epsilon)\sum_{t=1}^T \Tr(\sigma_G^{(t)}\sigma_D^{(t)}) \leq \sum_{t=1}^T \Tr(\sigma_G^*\sigma_D^{(t)}) + \frac{N}{\epsilon}  \nonumber  \\
\Rightarrow &  \sum_{t=1}^T \Tr(\sigma_G^{(t)}\sigma_D^{(t)})  \leq \sum_{t=1}^T \Tr(\sigma_G^*\sigma_D^{(t)}) + \epsilon T + \frac{N}{\epsilon} \nonumber\\
\Rightarrow &  \sum_{t=1}^T \Tr((\rho\sigma_D^{(t)}) 
   -\sum_{t=1}^T \Tr(\sigma_G^{(t)}\sigma_D^{(t)}) \geq
\sum_{t=1}^T \Tr((\rho\sigma_D^{(t)})   -\sum_{t=1}^T \Tr(\sigma_G^*\sigma_D^{(t)})
  - \epsilon T - \frac{N}{\epsilon}\nonumber \\
  \Rightarrow & \sum_{t=1}^T\mathcal{L}(\sigma_G^{(t)}, \sigma_D^{(t)})
 \geq \sum_{t=1}^T\mathcal{L}(\sigma_G^*, \sigma_D^{(t)}) - \frac{\epsilon^2 T + N}{2\epsilon}~.
\end{align*}

The first arrow results from the fact that: $0\leq \epsilon \Tr(\sigma_G^{(t)}\sigma_D^{(t)})\leq \epsilon $ with  $\Tr(\sigma_G^{(t)}\sigma_D^{(t)})\leq \Tr(\sigma_G^{(t)})\Tr(\sigma_D^{(t)})=1$ for any $1\leq t\leq T$. The second arrow comes from adding  the term $\sum_{t=1}^T \Tr((\rho\sigma_D^{(t)})$ on both sides. The last arrow comes from the definition of the loss function of QMMW. The above equation indicates that $R_T(\sigma_G)\leq \varepsilon_G=\frac{\epsilon^2 T + N}{2\epsilon T}$.
\end{proof}

\begin{proof}[Proof of Claim \ref{clm1}] This claim can be easily proved by imitating the proof of Claim \ref{clm6}. Conceptually, we hope to derive a bound of the classification error, i.e.,
\begin{equation}
\sum_{t=1}^T\mathcal{L}(\sigma_G^{(t)}, \sigma_D^{(t)})  \leq \sum_{t=1}^T\mathcal{L}(\sigma_G^{(t)},  \sigma_D^{*}) +\varepsilon_D T  ~,
\end{equation}
where $\sigma_D^{*}$ refers to the optimal solution. We can then quantify the regret of $R_T( \mathcal{D}_{AB})$. 

Following the proof of Claim \ref{clm6}, we define  a potential function $\Theta^{(T)}=\Tr(W^{T})$ with $W^{(T)} =e^{-\epsilon \sum_{t=1}^{T-1} (\rho-\sigma_G^{(t)})}$.  
Let  $Q^{(t)}$ be $Q^{(t)}=\rho-\sigma_G^{(t)}$ with $Q^{(t)}=Q^{(t)}_++Q^{(t)}_-$. We decompose $Q(t)$ into two terms, $Q^{(t)}_+$ and  $Q^{(t)}_-$ with $Q(t)=Q^{(t)}_++Q^{(t)}_-$, where $Q^{(t)}_+$ ($Q^{(t)}_-$) is formed by all non-negative (negative) eigenvalues and eigenvectors of $Q^{(t)}$.   Defining  that   $\epsilon_1=1-e^{-\epsilon}$ and  $\epsilon_2=e^{\epsilon} - 1$, 
  we have:
\begin{eqnarray}\label{eqn:clm5}
&&\Theta^{(t+1)} =\Tr{(W^{(t+1)})} 
=\Tr{\left(e^{\sum_{\tau=1}^t -\epsilon (\rho-\sigma_G^{(\tau)} )}\right)} \nonumber\\
\leq &&\Tr{\left(e^{\sum_{\tau=1}^{t-1} -\epsilon (\rho-\sigma_G^{(\tau)})} e^{-\epsilon Q^{(t)} }\right)} = \Tr\left(W^{(t)}e^{-\epsilon (Q^{(t)}_{+} + Q^{(t)}_{-}) }\right) \nonumber \\
\leq &&  \Tr\left(W^{(t)}e^{-\epsilon Q^{(t)}_{+}  }e^{-\epsilon  Q^{(t)}_{-}}\right) \nonumber\\
		\leq && 
\Tr\left(W^{(t)}(\mathbb{I}-\epsilon_1Q^{(t)}_+)(\mathbb{I}-\epsilon_2Q^{(t)}_-)\right) \nonumber \\
= && 
\Tr\left(W^{(t)}(\mathbb{I}-\epsilon_1Q^{(t)}_+ - \epsilon_2Q^{(t)}_-)\right), \text{since $Q_+\perp Q_-$} \nonumber\\
= && 
\Tr\left(W^{(t)}\right)\left(1-\frac{\Tr\left(W^{(t)}(\epsilon_1Q^{(t)}_++\epsilon_2Q^{(t)}_-)\right)}{\Tr\left(W^{(t)}\right)}\right)  \nonumber \\
= &&
\Tr\left(W^{(t)}\right)\left(1-{\Tr\left(\sigma_D^{(t)}(\epsilon_1Q^{(t)}_++\epsilon_2Q^{(t)}_-)\right)}\right)  \nonumber\\
= && 
\Tr\left(W^{(t)}\right)\left(1-{\Tr\left(\sigma_D^{(t)}(\epsilon_1Q^{(t)}+(\epsilon_2-\epsilon_1)Q^{(t)}_-)\right)}\right)  \nonumber\\
\leq  && 
\Tr\left(W^{(t)}\right)\left(1-{\Tr\left(\sigma_D^{(t)}(\epsilon_1Q^{(t)}_++\epsilon_1Q^{(t)}_-)\right)}\right),  
\nonumber\\
\leq &&
 \Theta^{(t)}e^{-\epsilon_1\Tr{(\sigma_D^{(t)}(\rho-\sigma_G^{(t)}))}}~,
\end{eqnarray}
where the first inequality and the second inequality come from the Golden-Thompson inequality, the third inequality employs the conclusion of Corollary \ref{coro:clm7}, the penultimate inequality is supported by $-\Tr(\sigma_D^{(t)}(\epsilon_2-\epsilon_1)Q^{(t)}_-) \geq 0$ (To be proved later), and the last inequality is supported by $1-x\leq e^{-x}$. We now prove that the last second inequality always satisfies $-\Tr(\sigma_D^{(t)}(\epsilon_1-\epsilon_2) Q^{(t)}_-)\geq 0$. Since $\sigma_D^{(t)}\succeq 0$, it is equivalent to prove $(\epsilon_1-\epsilon_2) Q^{(t)}_-\preceq   0$. Due to $Q_-^{(t)}\preceq 0$, $(\epsilon_1-\epsilon_2) Q^{(t)}_-\preceq   0$  is reduced to prove $\epsilon_2-\epsilon_1\geq 0$, which is always succeed since $e^{\epsilon}+e^{-\epsilon}>2$. 

By induction, we have 
\begin{equation}\label{eqn:33}
	\Theta^{(T+1)}\leq \Theta^{(1)}e^{\sum_{t=1}^T \Tr(\sigma_D^{(t)}(\rho-\sigma_G^{(t)})) }~. 
\end{equation}
Due to $\Theta^{(1)}=\Tr(e^{-\epsilon (\rho-\sigma_G^{(1)})) })\leq \Tr(e^{\mathbb{I}})\leq e\Tr(\mathbb{I}_{AB})\leq e^{N+1}$, the above inequality can be reformulated as  
\begin{equation}
	\Theta^{(T+1)}\leq e^{N+1}e^{\sum_{t=1}^T \Tr(\sigma_D^{(t)}(\rho-\sigma_G^{(t)})) }~. 
\end{equation}

Concurrently, we have 
\begin{equation}\label{eqn:clm7_1}
	\Theta^{(T+1)} \geq e^{\sum_{t=1}^T\Tr{(\sigma_D^{*}(\rho-\sigma_G^{(t)}))}}~,
\end{equation}
where the proof is analogous to   Eqn.~(\ref{eqn:21}). We specifically  set $A=\sum_{t=1}^T(\rho-\sigma_G^{(t)})$. The left term  $\Theta^{(T)}$ of Eqn.~(\ref{eqn:clm7_1}) follows  $\Theta^{(T)}=\Tr(\exp(-\epsilon A))=\sum_k \exp{(-\epsilon\lambda_k(A))}\geq \exp{(-\epsilon\lambda_n(A))}$, where  $\lambda_n(A)$ refers to the minimum absolute eigenvalue  of $A$. The right term of Eqn.~(\ref{eqn:clm7_1})  follows $e^{-\epsilon\sum_{t=1}^T\Tr{(\sigma_D^{*}A)}}\leq \exp{(-\epsilon\lambda_n(A))}$, since we always have $\lambda_n(A)\leq \Tr(A\sigma_D^{*})$ with $0\preceq \sigma_D^{*}\preceq \mathbb{I}_{AB}$ and $\Tr{\sigma_D^{*}}=1$. This leads to $\exp{-\epsilon\lambda_n(A)}\geq \exp{-\epsilon\Tr(A \sigma_D^{*})}$. Therefore, we obtain Eqn.~(\ref{eqn:clm7_1}).

By connecting Eqn.~(\ref{eqn:33}) with Eqn.~(\ref{eqn:clm7_1}), we have 
\begin{equation}\label{eqn:clm7_2}
 e^{-\epsilon\sum_{t=1}^T\Tr{(\sigma_D^{*}(\rho-\sigma_G^{(t)}))}}
 \leq  e^{N+1}  e^{-\epsilon_1\sum_{t=1}^T\Tr{(\sigma_D^{(t)}(\rho-\sigma_G^{(t)}))}} ~.
\end{equation}

Taking logarithms and simplifying Eqn.~(\ref{eqn:clm7_2}), we obtain the  following inequality,
\begin{align*}\label{eqn:clm7_3}
&  -\epsilon \sum_{t=1}^T \Tr{(\sigma_D^*(\rho-\sigma_G^{(t)}))} \leq (N+1)-\epsilon_1 \sum_{t=1}^T\Tr{(\sigma_D^{(t)}(\rho-\sigma_G^{(t)}))}
\nonumber\\
\Rightarrow &  -\epsilon \sum_{t=1}^T\Tr{(\sigma_D^*(\rho-\sigma_G^{(t)}))} \leq (N+1)-\epsilon(1-\epsilon )\sum_{t=1}^T\Tr{(\sigma_D^{(t)}(\rho-\sigma_G^{(t)}))}
\nonumber\\
\Rightarrow &(1-\epsilon)\sum_{t=1}^T\Tr{(\sigma_D^{(t)}(\rho-\sigma_G^{(t)}))} \leq \sum_{t=1}^T \Tr{(\sigma_D^*(\rho-\sigma_G^{(t)}))}+ \frac{N+1}{\epsilon}\nonumber\\
\Rightarrow & \sum_{t=1}^T \Tr{(\sigma_D^{(t)}(\rho-\sigma_G^{(t)}))} \leq 
  \sum_{t=1}^T \Tr{(\sigma_D^*(\rho-\sigma_G^{(t)}))} + \epsilon T + \frac{N+1}{\epsilon} \nonumber\\
  \Rightarrow & \sum_{t=1}^T\mathcal{L}(\sigma_G^{(t)}, \sigma_D^{(t)}) - \frac{\epsilon^2 T + (N+1)}{2\epsilon}
 \leq \sum_{t=1}^T\mathcal{L}(\sigma_G^{(t)}, \sigma_D^*)~.
\end{align*}
The first arrow results from the fact that $\epsilon_1\geq \epsilon(1-\epsilon)$. The second arrow comes from dividing $\epsilon$ on both sides. The third arrow employs $\Tr{(\sigma_D^*(\rho-\sigma_G^{(t)}))}\leq 1$, since $\Tr{(\sigma_D^*(\rho-\sigma_G^{(t)}))}=\Tr{(\sigma_D^*\rho)}-\Tr{(\sigma_D^*\sigma_G^{(t)}))}\leq \Tr{(\sigma_D^*\rho)}\leq 1$. The last arrow comes from the definition of the loss function of QMMW. This inequality immediately indicates that the bound for the minimized regret $R(\sigma_D)$ is $R(\sigma_D)\leq \varepsilon_D =(\epsilon^2 T + N+1)/(2\epsilon T)$. 
\end{proof}

 \subsection{SM (B) Instantiation of QMMW} 
 Let us briefly review the Quantum Multiplicative Matrix Weight algorithm  (QMMW). QMMW consists of three steps: First, initializing parameters; Second,  updating the generated state  $\sigma_G^{(t)}$ and the density operator  $\sigma_{D}^{(t)}$ iteratively during $T$ training rounds; Last, calculating the loss $\mathcal{L}(\sigma_G^{(t)}, \sigma_D^{(t)})$ using $\bar{\sigma}_G$ and $\bar{\sigma}_D$ with $\bar{\sigma}_G=\sum_{t=1}^T\sigma_G^{(t)}/\sqrt{T}$ and $\bar{\sigma}_D=\sum_{t=1}^T \sigma_D^{(t)}/T$.  Both  the generated state $\sigma^{(t)}_G$ and the discriminator $\sigma_D^{(t)}$ can be treated as  Gibbs state. The formal definition of Gibbs state and Gibbs Sampler is: 
 \begin{definition}[Gibbs Sampler \cite{van2018improvements}] \label{def:gibss-sam}
 A $\theta$-precise Gibbs-sampler is a unitary that creates as output a purification of a $\theta$-approximation in trace distance of the Gibbs state  $\exp{(\sum_{t=1}^T-y_t C^{(t)})}/\Tr{(\exp(\sum_{t=1}^T-y_t C^{(t)}))}$, where $\{C^{(t)}\}_{t=1}^T$ is a set of Hermitian matrix. If $\|y\|_1\leq K$ and the support of $y$ has the size at most $d$, then we write  $\mathcal{T}(K, d, 4\theta)$ for the cost of this unitary.

 We also allow Gibbs-samplers that require a random classical input seed $S\in\{0,1\}^a$ for some $a=\mathcal{O}(\log (1/\theta))$. In this case the output should be a $\theta$-approximation of the Gibbs state with high probability $(\geq 4/5)$ over a uniformly random input seed $S$.
\end{definition}
 
For ease of description, we denote the responsible $\theta$-precise Gibbs-sampler for the generated  state $\sigma^{(t)}_G$  as  $U_{\sigma^{(t)}_G}$. For the discriminator $\sigma_D^{(t)}$, we denote the responsible $\theta$-precise Gibbs-sampler as  $U_{\sigma_D^{(t)}}$. Observing the QMMW algorithm, the third step can be efficiently executed using the SWAP test once we have prepared  $\{U_{\sigma^{(t)}_G}\}_{t=1}^T$   and $\{U_{\sigma_D^{(t)}}\}_{t=1}^T$ \cite{buhrman2001quantum}, where the query complexity is $\mathcal{O}(1)$.  Consequently,  preparing Gibbs samplers is the central part of the  implementation of QMMW and also dominates the computational cost.

We now elaborate how to accomplish the second step of QMMW, i.e.,  the construction of $\{U_{\sigma^{(t)}_G}\}_{t=1}^T$   and $\{U_{\sigma_D^{(t)}}\}_{t=1}^T$. This task employs  two subroutines  $O_{\sigma_G^T}$ and $O_{\sigma_D^T}$, i.e.,  the subroutine $O_{\sigma_G^T}$ after training $t$-rounds is 
\begingroup
\allowdisplaybreaks
\begin{align*}
	\begin{cases}
		O_{\sigma_G^T}\ket{\tau}\ket{0}^{\otimes N'} :=\sum_{\tau\leq t} \alpha_{\tau} \ket{\tau}U_{\sigma^{(\tau)}_G}\ket{0}^{\otimes N'} = \sum_{\tau\leq t} \alpha_{\tau} \ket{\tau}\ket{\psi_G^{(\tau)}}, & \tau\leq t \\
O_{\sigma_G^T}\ket{\tau}\ket{0}^{\otimes N'} :=\sum_{\tau> t} \alpha_{\tau} \ket{\tau}\ket{0}^{\otimes N'} =\sum_{\tau> t} \alpha_{\tau} \ket{\tau}\ket{0}^{\otimes N'}, & \tau> t
	\end{cases}
\end{align*}
\endgroup
where $N'=a+N$, $a$ refers to the number of ancillary qubits with $a\sim O(\log N)$ \cite{van2018improvements}, $\ket{\tau}$ refers to the computational basis corresponding to the $\tau$-th training round,  $U_{\sigma^{(\tau)}_G}$ prepares the purification $\ket{\psi_G^{(\tau)}}$  of the Gibbs state $\sigma_{G}^{(\tau)}$, and $\sum_{\tau=1}^t \alpha_{\tau}^2=1$. Note that the Gibbs sampler $U_{\sigma^{(\tau)}_G}$ can only be prepared for $\tau \leq t$. Similarly, the subroutine $O_{\sigma_D^T}$ after training $t$-rounds is 
\begingroup
\allowdisplaybreaks
\begin{align*}
	\begin{cases}
		O_{\sigma_D^T}\ket{\tau}\ket{0}^{\otimes N'} :=\sum_{\tau\leq t} \beta_{\tau}\ket{\tau}U_{\sigma^{(\tau)}_D}\ket{0}^{\otimes N'} =\sum_{\tau\leq t} \gamma_{\tau} \ket{\tau}\ket{\psi_D^{(\tau)}}, & \tau\leq t \\
O_{\sigma_D^T}\ket{\tau}\ket{0}^{\otimes N'} := \sum_{\tau> t} \gamma_{\tau} \ket{\tau}\ket{0}^{\otimes N'} = \sum_{\tau> t} \gamma_{\tau} \ket{\tau}\ket{0}^{\otimes N'}, & \tau> t
	\end{cases}
\end{align*}
\endgroup
where $\ket{\psi_D^{(\tau)}}$ refers to the purification of the Gibbs state $\sigma_{G}^{(\tau)}$. After training $T$ steps with setting $\alpha_{\tau}=\sqrt{1/T}$ and $\gamma_{\tau}=\sqrt{1/T}$ for any $\tau\in [T]$, we prepare the purification of Gibbs state $\bar{\sigma}_G^{(t)}$ and $\bar{\sigma}_D^{(t)}$. 

In QMMW, we employ the method proposed by \cite{van2018improvements} to prepare the Gibbs sampler, which has the following result,
\begin{thm}[Theorem 22, \cite{van2018improvements}]\label{thm:22}
	 Suppose that we have query access to the  unitaries $U_{\rho^{\pm}}$ preparing a purification of the subnormal density operators \footnote{A density operator $Q$ is said to be subnormal if it satisfies $\Tr{Q}=1$.} $\mu^{\pm}$, such that $H=(\mu^+ - \mu^-)/2$. Suppose that $\beta\geq 1$ and $\theta, \delta\in (0, 1]$,  there is a quantum algorithm, that using $\mathcal{O}_{\theta}(\beta^{3.5}/\delta)$ queries to controlled-$U_{\rho^{\pm}}$ or their inverses, prepares a purification of a quantum state $\rho_S$ such that $
	 	\left|\rho_S-\frac{e^{-\beta H}}{\Tr\left(e^{-\beta H} \right)}\right|\leq \theta~,$
	 where $S$ is an $\mathcal{O}(\log (\beta/\delta))$-bit random seed, and the above holds for at least $(1-\delta)$-fraction of seeds.
\end{thm}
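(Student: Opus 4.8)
The plan is to reduce the entire running cost of QMMW to the cost of assembling the two families of Gibbs samplers $\{U_{\sigma_G^{(t)}}\}_{t=1}^T$ and $\{U_{\sigma_D^{(t)}}\}_{t=1}^T$, and then to invoke the Gibbs-sampling subroutine of Theorem~\ref{thm:22} as a black box. As noted in SM(B), once these samplers and the superposition oracles $O_{\sigma_G^T}, O_{\sigma_D^T}$ are available, the third step of QMMW (evaluating $\mathcal{L}(\bar{\sigma}_G,\bar{\sigma}_D)$) is carried out by a SWAP test at $\mathcal{O}(1)$ query cost per repetition and never dominates. Hence the whole problem is to count how many queries to $U_\rho$ are spent building the samplers.

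First I would cast each update rule into the exact form required by Theorem~\ref{thm:22}, namely $e^{-\beta H}/\Tr(e^{-\beta H})$ with $H=(\mu^+-\mu^-)/2$ a difference of subnormal density operators. For the generator, $\sigma_G^{(t)}\propto \exp(-\epsilon\sum_{\tau\le t}\sigma_D^{(\tau)})$ rewrites as a Gibbs state with $\mu^+=\bar{\sigma}_D^{(t)}:=\tfrac1t\sum_{\tau\le t}\sigma_D^{(\tau)}$, $\mu^-=0$, and inverse temperature $\beta=2\epsilon t$; for the discriminator, $\sigma_D^{(t+1)}\propto\exp(-\epsilon\sum_{\tau\le t}(\rho-\sigma_G^{(\tau)}))$ takes the same form with $\mu^+=\rho$, $\mu^-=\bar{\sigma}_G^{(t)}$, and again $\beta=2\epsilon t$. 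The purifications of $\mu^\pm$ are precisely what the subroutines $O_{\sigma_D^T},O_{\sigma_G^T}$ of SM(B) produce. Since $\epsilon=\Theta(\sqrt{N/T})$ and $t\le T$, the inverse temperature is uniformly bounded by $\beta=\mathcal{O}(\epsilon T)=\mathcal{O}(\sqrt{NT})$, so Theorem~\ref{thm:22} prepares each single sampler with $\mathcal{O}_\theta(\beta^{3.5}/\delta)=\mathcal{O}_\theta((NT)^{7/4}/\delta)$ queries to its input oracle.

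The main obstacle is that the two families are mutually nested: a call to $U_{\sigma_G^{(t)}}$ invokes $O_{\sigma_D^T}$, which invokes $\{U_{\sigma_D^{(\tau)}}\}_{\tau\le t}$, each of which invokes $O_{\sigma_G^T}$, and so on. Accounting for this naively round by round would multiply the per-round factor $\beta^{3.5}$ along a dependency chain of depth up to $T$ and give an exponential-in-$T$ blow-up. The heart of the proof is to show this does not occur: rather than re-expanding each sampler recursively, I would bound the total circuit size, i.e. the aggregate number of $U_\rho$ gates across all samplers built in a single forward pass $\tau=1,\dots,T$. Writing $G_t,D_t$ for the $U_\rho$-query counts of $U_{\sigma_G^{(t)}},U_{\sigma_D^{(t)}}$ and using that one application of $O_{\sigma_D^T}$ (resp. $O_{\sigma_G^T}$) costs the running sum $\sum_{\tau\le t}D_\tau$ (resp. $\sum_{\tau\le t}G_\tau+1$), the recurrence bottoms out at the trivially preparable $\sigma_D^{(1)}=\mathbb{I}/2^N$; the key lemma is that the resulting totals telescope to a polynomial in $N$ and $T$ instead of compounding multiplicatively. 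Collecting the factor $\beta^{3.5}=\mathcal{O}((NT)^{7/4})$, the superposition width $\mathcal{O}(T)$ of the mixture oracles, and the $T$ outer rounds, I would then check that the powers of $N$ and $T$ combine to the stated $\mathcal{O}(N^3T^4)$.

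Finally, I would close the error budget so the approximations are self-consistent. Each sampler is only $\theta$-precise and succeeds on a $(1-\delta)$-fraction of seeds, and these errors feed forward through $\bar{\sigma}_G^{(t)},\bar{\sigma}_D^{(t)}$ into the next round; I would therefore take $\theta,\delta$ polynomially small in $1/N,1/T$ so that the accumulated trace-distance error over the $T$ coupled preparations stays below the tolerance $\epsilon=\sqrt{N/T}$ used in Theorem~\ref{thm2}. Because the dependence in Theorem~\ref{thm:22} on $\theta$ is only polylogarithmic and on $\delta$ only $1/\delta$, this tightening adds at most polynomial overhead. I expect the telescoping bound on the nested samplers and the control of error propagation across the $T$ rounds to be the two places where the real work lies.
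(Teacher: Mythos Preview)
Your proposal is not a proof of the stated theorem. Theorem~\ref{thm:22} is a result quoted verbatim from \cite{van2018improvements}; the present paper does not prove it and uses it only as a black-box subroutine. What you have actually written is a proof sketch for Theorem~\ref{thm:exp} (the $\mathcal{O}(N^3T^4)$ cost of QMMW), which \emph{invokes} Theorem~\ref{thm:22} as its main tool.

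Assuming you intended Theorem~\ref{thm:exp}, the paper's own argument in SM(C) is far shorter and takes a different route. It bounds $\beta=2\epsilon t\le\mathcal{O}(\sqrt{NT})$, uses the deliberately loose estimate $\beta^{3.5}\le\mathcal{O}((NT)^3)$, multiplies by the $T$ outer rounds, and sets $\delta$ to a small constant to arrive at $\mathcal{O}(N^3T^4)$. It does \emph{not} engage with the nested-recursion issue you raise: the last paragraph of SM(C) simply asserts that $\mu^{\pm}$ are prepared with $\mathcal{O}(1)$ queries, effectively counting cost in calls to the already-assembled oracles $O_{\sigma_G^T},O_{\sigma_D^T}$ rather than in calls to $U_\rho$. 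Your telescoping lemma and error-propagation analysis, while well-motivated, address a concern the paper itself never confronts; the paper buys its brevity precisely by sidestepping the cost model you take seriously, and your tighter bound $\beta^{3.5}=\mathcal{O}((NT)^{7/4})$ plus extra width and round factors is not how the paper reaches $N^3T^4$.
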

The update rule of QMMW accompanies  with the query access to $U_{\sigma_D^{(1)}}$ (A set of  Hadamard gates to prepare the maximally mixed state $\sigma_D^{(1)}$)  and $U_{\rho}$ enables us to use Theorem \ref{thm:22} to construct all Gibbs samplers. Without loss of generality, we consider the preparation of $U_{\sigma^{(t)}_G}$ and $U_{\sigma^{(t)}_D}$. Following the update rule, the density operator $\mu^+$ and $\mu^-$ defined in Theorem \ref{thm:22} refers to  $0$ and the purification of  $\sum_{\tau=1}^{t-1} \sigma_D^{(\tau)}/(t-1)$, respectively to prepare $U_{\sigma^{(t)}_G}$. Meanwhile, we have $\beta=2\epsilon(t-1)$. It is easy to see that $\beta(\mu^+-\mu^-)/2=\sum_{\tau=1}^{t-1}-\epsilon \sigma_D^{\tau}$ is the exponential term  for updating  $\sigma_G^{(t)}$. The purified state $\mu^+$ can be generated by querying the subroutine $O_{\sigma_D^T}$ once, i.e., with setting $\gamma_{\tau}=\sqrt{1/(t-1)}$ for any $\tau\in[t-1]$. Likewise, to prepare $U_{\sigma^{(t)}_D}$, the density operator $\mu^+$ and $\mu^-$ refers to the purification of $\rho$ and $\sum_{\tau=1}^{t-1} \sigma_G^{(\tau)}/(t-1)$, respectively. Meanwhile, we have $\beta=2\epsilon(t-1)$.   The corresponding purification state of $\mu^+$ and $\mu^-$ can be prepared by querying $U_{\rho}$ and $O_{\sigma_G^T}$ once, with setting $\gamma_{\tau}=\sqrt{1/(t-1)}$ for any $\tau\in[t-1]$. By induction, we can prepare all Gibbs samplers and build two subroutines after $T$ training rounds.

\subsection{SM (C) Proof of Theorem 2.}
\begin{proof}
As discussed in the previous subsection, the main computational cost of QMMW is in the preparation of two subroutines, or equivalently a set of Gibbs samplers. We now employ the conclusion of Theorem \ref{thm:22} to characterize the computation cost of building two subroutines. Observing the conclusion of Theorem \ref{thm:22}, the computational  cost is highly related to the two variables, i.e., $\beta$ and $\delta$. Notably, the error $\theta$  in Theorem \ref{thm:22} is caused by loading classical input into quantum state, which is not required in QMMW.  This error can thus be eliminated in our case and the query complexity transformed to $\mathcal{O}(\beta^{3.5}/\delta)$. 

At $(t+1)$-th step, we have $\beta = 2 \epsilon t$ with $\epsilon=\sqrt{N/T}$, which leads to the cost $\mathcal{O}(\beta^{3.5}/\delta) \leq  \mathcal{O}((NT)^{3}/\delta)$. Since each training round requires at most $\mathcal{O}((NT)^{3}/\delta)$ query complexity to prepare two Gibbs samplers, the total query complexity for building two subroutines is $\mathcal{O}(N^{3}T^4/\delta)$ after $T$ training rounds. By setting $\delta$ as a small constant, the query complexity of QMMW is $\mathcal{O}(N^{3}T^4)$,  with  the first step and the third step  requiring only $\mathcal{O}(1)$ query complexity.

The runtime cost of QMMW is also $\mathcal{O}(N^{3}T^4)$. The first and third steps of QMMW only require $\mathcal{O}(1)$ and $\mathcal{O}(\log N)$ elementary operations. In second step, $\mu^+$ and $\mu^-$ can be prepared with $\mathcal{O}(1)$ queries and using $\mathcal{O}(poly \log(\beta))$ elementary operations \cite{van2018improvements}. Therefore, the total runtime complexity for QMMW is  $\mathcal{O}(N^{3}T^4)$.
\end{proof}

\subsection{SM(D) Proof of Lemma  3}
\begin{proof}[Proof of Lemma 3]
The zero-sum game played by QuGAL is identical to that of QuGAN, since both  possess the convex-concave property and have the same equilibrium, supported by the linear property of trace operations, and $\rho$ is sampled from a convex set. Concretely, we have  
\begin{eqnarray}
&&\max_{\sigma}\min_{\mathcal{D}}\mathcal{L}(\sigma, \mathcal{D}) \nonumber \\
=&&\max_{\sigma}\min_{\mathcal{D}} \Tr((\mathbb{I}-\mathcal{D})\sigma)P(G)+\Tr(\mathcal{D}\rho)P(R) \\
=&&\max_{U_G}\min_{U_D} \Tr((\mathbb{I}\otimes (\mathbb{I}-E_F))U_D(\sigma_G\otimes\ket{0}\bra{0})U_D^{\dagger})P(G)+\Tr((\mathbb{I}\otimes E_F)U_D(\rho\otimes\ket{0}\bra{0})U_D^{\dagger})P(R)\nonumber~, 
\end{eqnarray}  
where $\sigma_G =\Tr_a(U_G(\ket{0}\bra{0})^{\otimes N'}U_G^{\dagger} )$. The first equality  follows the definition of loss function of  QuGAL, which is  
$$\mathcal{L}(\sigma, \mathcal{D}) = \Tr((\mathbb{I}-\mathcal{D})\sigma)P(G)+\Tr(\mathcal{D}\rho)P(R) ~.$$ The second equality employs Naimark’s dilation theorem and Stinespring's dilation theorem. 

 \end{proof}
 
 \subsection{SM (E) Generative Adversarial Network}
The generator $G$ and discriminator $D$ for classical GANs  are typically implemented by multi-layer neural networks \cite{goodfellow2016nips}. The generator can be treated as a function $G$, which aims to map from a random variable $\bm{z}\in\mathbb{R}^{|\bm{z}|}$ sampling from the latent space to the data space $\bm{x}\in\mathbb{R}^{|\bm{x}|}$. Mathematically, we have $G:G(\bm{z})\rightarrow \bm{x}$, where $|\cdot|$ denotes the number of dimensions with $|\bm{z}|\ll |\bm{x}|$, and $\bm{x}$ refers to the generated data (e.g., an image with $|\bm{x}|$ pixels). Discriminator $D$ may be similarly characterized as a function that maps from input data to the class distribution: $D : D(\bm{x})\rightarrow (0,1) $, where the training data are expected to be $1$ (True) and the generated data are expected to be $0$ (False). If the distribution learned by the generator is able to match the real data distribution perfectly,  the discriminator will be maximally confused, predicting $0.5$ for all inputs. This unique solution whereby $D$ can never discriminate between the generated data and the training  data is called Nash equilibrium \cite{goodfellow2014generative}.

The training of GANs involves  finding the parameters of a discriminator $D$ to maximize  classification accuracy, and finding the parameters of a generator $G$ to maximally confuse the discriminator. The performance of GAN is evaluated using a loss function $L(G,D)$ which  depends on both the generator and the discriminator. The training procedure can be treated as:
\begin{equation}
\min_G\max_D L(G,D),
\end{equation}
where $	L(G,D) = \mathbb{E}_{\bm{x}\sim p_{data}(\bm{x})} [\log D(\bm{x})] +\mathbb{E}_{\bm{z}\sim p(\bm{z})} [\log(1-D(G(\bm{z})))]$, $p_{data}(\bm{x})$ refers to the distribution of  the training dataset, and $p(\bm{z})$ is the probability distribution of the latent variable $\bm{z}$. During training, the parameters of one model are updated, while the parameters of the other are fixed. 

			\end{document}